\newcolumntype{A}{ >{$} r <{$} @{} >{${}} l <{$} } 
\newlist{theoremcontents}{enumerate}{1}
\setlist[theoremcontents,1]{%
  label=(\bfseries{\roman*}),
  itemindent=0pt,
  wide=0.5\parindent,
  itemsep=0.0pt,
  parsep=0.0pt,
  listparindent=-10pt,%
  afterlabel={{.\nobreakspace}}
}
\newlist{inlinelist}{enumerate*}{1}
\setlist*[inlinelist,1]{%
  label=(\roman*),
}
\newlist{inlinelistalph}{enumerate*}{1}
\setlist*[inlinelistalph,1]{%
  label=(\alph*),
}
\def\adl@drawiv#1#2#3{%
        \hskip.5\tabcolsep
        \xleaders#3{#2.5\@tempdimb #1{1}#2.5\@tempdimb}%
                #2\z@ plus1fil minus1fil\relax
        \hskip.5\tabcolsep}
\newcommand{\cdashlinelr}[1]{%
  \noalign{\vskip\aboverulesep
           \global\let\@dashdrawstore\adl@draw
           \global\let\adl@draw\adl@drawiv}
  \cdashline{#1}
  \noalign{\global\let\adl@draw\@dashdrawstore
           \vskip\belowrulesep}}
\declarecommand{\refta}{\ref{the:main:a}}
\declarecommand{\reftanl}{\ref*{the:main:a}}
\declarecommand{\reftb}{\ref{the:main:b}}
\declarecommand{\reftbnl}{\ref*{the:main:b}}
\declarecommand{\reftc}{\ref{the:main:c}}
\declarecommand{\reftcnl}{\ref*{the:main:c}}
\declarecommand{\reflema}{\hyperlink{lem:ineq:a}{
    \ref*{lem:ineq}\emph{(i)}}}
\declarecommand{\reflemb}{\hyperlink{lem:ineq:b}{
    \ref*{lem:ineq}\emph{(ii)}}}
\declarecommand{\reflemc}{\hyperlink{lem:ineq:c}{
    \ref*{lem:ineq}\emph{(iii)}}}
\declarecommand{\ie}{\emph{i.e.,}}
\declarecommand{\eg}{\emph{e.g.,}}
\declarecommand{\iff}{iff}
\declarecommand{\iid}{\emph{i.i.d.}}
\declarecommand{\mss}{\emph{m.s.s.}}
\declarecommand{\emss}{\emph{e.m.s.s.}}
\declarecommand{\sdp}{SDP}
\declarecommand{\lschauderf}{{Brouwer Fixed-Point Theorem}}
\declarecommand{\lschauder}{{Brouwer Fixed-Point Theorem}}
\declarecommand{\lschauderfirst}{\emph{Brouwer Fixed-Point Theorem}}
\declarecommand{\lbanachf}{{Banach Fixed-Point Theorem}}
\declarecommand{\posN}{{\N_+}}
\declarecommand{\pinv}{\dagger}
\declarecommand{\xinit}{x_0}
\declarecommand{\nsample}{N}
\declarecommand{\Am}{\mathbf{A}}
\declarecommand{\Bm}{\mathbf{B}}
\declarecommand{\Lm}{\mathbf{A}_\star}
\declarecommand{\At}{\trans{\mathbf{A}}}
\declarecommand{\Bt}{\trans{\mathbf{B}}}
\declarecommand{\Lt}{\trans{\mathbf{A}_\star}}
\declarecommand{\Sig}{\Sigma_0}
\declarecommand{\Sigh}{\hat{\Sigma}_0}
\declarecommand{\Sigd}{\Delta \Sig}
\declarecommand{\oPe}{\Sig \otimes P^\star}
\declarecommand{\oPde}{\Sig \otimes \Delta P}
\declarecommand{\oPhe}{\Sig \otimes \hat{P}}
\declarecommand{\oPhde}{\Delta \Sig \otimes \hat{P}}
\declarecommand{\oP}{{P_\otimes^\star}}
\declarecommand{\oPd}{\Delta P_\otimes}
\declarecommand{\oPh}{\hat{P}_\otimes}
\declarecommand{\oPhd}{\Delta \hat{P}_\otimes}
\declarecommand{\Ym}{Y}
\declarecommand{\Em}{\Delta Y}
\declarecommand{\Yh}{\hat{Y}}
\declarecommand{\Eh}{\Delta \hat{Y}}
\declarecommand{\Yme}{I + \Sm (\oPe)}
\declarecommand{\Eme}{\Sm (\oPde)}
\declarecommand{\Yhe}{I + \Sm (\oPhe)}
\declarecommand{\Ehe}{\Sm (\oPhde)}
\declarecommand{\Sm}{\mathbf{S}}
\declarecommand{\nm}{\bar{n}}
\declarecommand{\covarinf}{X_\infty^\star}
\declarecommand{\covarinfh}{\hat{X}_\infty}
\declarecommand{\covarinfd}{\Delta X_\infty}
\declarecommand{\Lhash}{{(\mathbf{A} + \mathbf{B} \hat{K})^\#}}
\declarecommand{\Lhasht}{\trans{{(\mathbf{A} + \mathbf{B} \hat{K})^\#}}}
\declarecommand{\cost}[1]{J_{K^\star}(#1)}
\declarecommand{\costact}[1]{J_{\hat{K}}(#1)}
\declarecommand{\lbsig}{\bm{\ubar{\alpha}_{\Sigma}}}
\declarecommand{\ubsig}{\bm{\bar{\alpha}_{\Sigma}}}
\declarecommand{\sigm}{\bm{\alpha_{\Sigma}}}
\declarecommand{\sigl}{\bm{\beta_{\Sigma}}}
\declarecommand{\tsig}{\bm{t_{\Sigma}}}
\declarecommand{\epsP}{\bm{\epsilon_P}}
\declarecommand{\epsK}{\bm{\epsilon_K}}
\declarecommand{\epsX}{\bm{\epsilon_X}}
\DeclareMathOperator{\subG}{subG}
\declarecommand{\Symm}{\mathbb S}
\declarecommand{\Symmpd}{\mathbb S_{++}}
\declarecommand{\Symmpsd}{\mathbb S_{+}}
\declarecommand{\op}[1]{\mathcal{#1}}
\declarecommand{\opmat}[1]{#1}
\declarecommand{\ric}{\op{R}}
\declarecommand{\lyap}{\op{L}}
\declarecommand{\lyapcl}{\op{L}_\star}
\declarecommand{\lyaph}{{\op{L}^\#}}
\declarecommand{\lyaphsa}{{\op{L}^\#_\star}}
\declarecommand{\ricf}{\lyap}
\declarecommand{\ricz}{\ric_0}
\declarecommand{\ricd}{\ric_\Delta}
\declarecommand{\ricfp}{\Phi}
\declarecommand{\lyapfp}{\Phi_{\lyap}}
\declarecommand{\F}{\op{F}}
\declarecommand{\G}{\op{G}}
\declarecommand{\H}{\op{H}}
\declarecommand{\Fo}{\op{F}_\star}
\declarecommand{\Ho}{\op{H}_\star}
\declarecommand{\Pc}{P_I}
\declarecommand{\Fc}{{\op{F}^\#}}
\declarecommand{\Hc}{{\op{H}^\#}}
\declarecommand{\Gc}{{\op{G}^\#}}
\declarecommand{\etaA}{\eta_\F}
\declarecommand{\kappaA}{\kappa_\F}
\declarecommand{\etaB}{\eta_\G}
\declarecommand{\etaBR}{\bar{\eta}_\G}
\declarecommand{\kappaB}{\kappa_\G}
\declarecommand{\etaAB}{\eta_\H}
\declarecommand{\kappaAB}{\kappa_\H}
\declarecommand{\etaL}{\eta_\F^\star}
\declarecommand{\kappaL}{\kappa_\F^\star}
\declarecommand{\etaLB}{\eta_\H^\star}
\declarecommand{\kappaLB}{{\kappa_\H^\star}}
\declarecommand{\etaBc}{\kappa_{\G}^\#}
\declarecommand{\etaLBc}{\kappa_{\H}^\#}
\declarecommand{\kappaBc}{\kappa_\G^\#}
\declarecommand{\etaLyap}{{\kappa_\lyap^\star}}
\declarecommand{\etaLyaph}{{\kappa_{\lyap}^\#}}
\declarecommand{\kappaS}{\kappa_{\ricz}}
\declarecommand{\kappaK}{\kappa_K}
\declarecommand{\etaK}{\eta_K}
\newcommand{\preliminary}[1]{{\leavevmode\color{MidnightBlue} #1}}
\renewcommand{\preliminary}[1]{}
\newcommand{\todo}[3]{\preliminary{{\color{#1} \small [TODO] \textbf{#2} --- #3}}}
\newcommand{\peter}[1]{\todo{MidnightBlue}{Peter}{#1}}
\newcommand{\archiv}[2]{{\leavevmode\color{SlateBlue2} #1}}
\newcommand{\archiv}[2]{{\leavevmode\color{black!100} #2}}
\title{\LARGE \bf
Sample Complexity of Data-Driven Stochastic LQR with Multiplicative Uncertainty
}
\author{Peter Coppens and Panagiotis Patrinos
\thanks{P. Coppens and P. Patrinos are with the Department of Electrical
Engineering (ESAT-STADIUS), KU Leuven, Kasteelpark Arenberg
10, 3001 Leuven, Belgium.
        {Email: \tt\footnotesize peter.coppens@kuleuven.be, panos.patrinos@kuleuven.be}}%
\thanks{The work of the first author is supported by: the Research Foundation
Flanders (FWO) PhD grant 11E5520N and research projects G086518N and G086318N;
Research Council KU Leuven C1 project No. C14/18/068; Fonds de la Recherche 
Scientifique – FNRS and the Fonds Wetenschappelijk Onderzoek – Vlaanderen under 
EOS project no 30468160 (SeLMA).}%
}%
\begin{document}

\maketitle
\thispagestyle{empty}
\pagestyle{empty}

\begin{abstract}
This paper studies the sample complexity of the stochastic Linear Quadratic
Regulator when applied to systems with multiplicative noise. We assume that the 
covariance of the noise is unknown and estimate it using 
the sample covariance, which results in suboptimal behaviour. The main 
contribution of this paper is then to bound the suboptimality of the methodology
and prove that it decreases with $1/\nsample$, where $\nsample$ denotes the 
amount of samples. The methodology easily generalizes to the case where the mean 
is unknown and to the distributionally robust case studied in a previous work
of the authors \cite{Coppens2019}. The analysis is mostly based on results from 
matrix function perturbation analysis \cite{Konstantinov1993}.

\end{abstract}

\section{INTRODUCTION}
The field of learning control has recently seen explosive growth, which can
be attributed to the availability of large amounts
of data, creating an incentive for controllers that use the available 
information optimally. A significant amount of this research effort is being 
directed towards the familiar Linear Quadratic Regulation (LQR) problem where
the transition matrices are unknown \cite{Dean_2019, Mania2019, Fazel2018}. 
Most of these developments however are related to deterministic systems. 

Instead this paper takes a different approach, considering systems that 
intrinsically include the uncertainty in the dynamics through stochastic 
disturbances. More specifically we study systems with a time-varying
multiplicative disturbance. These may cover a wide range of system classes like
Linear Parameter Varying (LPV) systems \cite{Wu1996, Byrnes1979} and Linear
Difference Inclusions (LDI) \cite{Boyd1994} or in our case, when the disturbance
varies stochastically, systems with multiplicative noise. Such systems have
already been studied in the context of learning control by using policy 
iteration \cite{Gravell2019} and intrinsically introduce robustness in the 
controller design \cite{Bernstein1986}.

The authors previously developed a control synthesis procedure using the
\emph{distributionally robust approach} that guarantees 
stability with high probability, when the true distribution of the system is not
known. This paper is related to that result and provides a methodology to 
evaluate the performance of the \emph{empirical approach}, where the sample 
mean and covariance are used to produce a controller making it similar to the 
\emph{certainty equivanlent approach} for deterministic LQR. Therefore 
the proofs are similar to the result of Mania et. al. 
\cite{Mania2019}, where the sample complexity of this 
{certainty equivalent approach} is studied.

The main result is then a suboptimality guarantee for the {empirical} 
controller. To produce such a result we make use of Riccati perturbation 
analysis. This paper is, to the authors' knowledge, the first instance of such a 
perturbation analysis being applied to discrete time systems with multiplicative
noise. A Riccati perturbation bound for continuous time systems was already
produced in \cite{Chiang2013}. 

The remainder of this paper is then structured as follows.
Section~\ref{sec:prob} presents the problem statement and the assumptions used
throughout the paper. The main result is then presented in Section~\ref{sec:main} in
the form of three theorems that show how the uncertainty on the covariance propagates 
throughout the controller synthesis. 
The proof of these three components are
then given in the following sections. Section~\ref{sec:pre} lists some
results that are required for the remainder of the derivations as well as 
a way of deriving confidence bounds for the sample covariance. 
Section~\ref{sec:ric} then extends upon the results of 
Konstantinov et. al. \cite{Konstantinov1993} to study the perturbed Riccati
equation. Section~\ref{sec:ctr} uses a result from convex analysis 
to derive a bound for the perturbation of the controller. 
Then Section~\ref{sec:reg} proofs the  main suboptimality 
bound, from which a sufficient condition for mean square stability 
(\mss) of the true system under the empirical controller also follows.
Finally Section~\ref{sec:con} provides a conclusion and suggestions for 
further work.
\subsection{Notation}
Let $\Re$ denote the reals, $\N$ the naturals and $\posN = \N 
\setminus \{0\}$. We use $\Symm^n$ to denote the set of $n$-by-$n$ 
symmetric matrices.
The set of positive (semi)definite matrices is then written as $\Symmpd^n$
$(\Symmpsd^n$). Then, for $P, Q \in \Symm^n$, we write $P \sgt Q$ 
($P \sgeq Q$) to signify that $P - Q \in \Symmpd^n$ ($P - Q \in \Symmpsd^n$). 
We denote by $\otimes$ the Kronecker product, by $A^\pinv$ the pseudoinverse
of some matrix $A$.. 
We assume that all random 
variables are defined on a probability space $(\Omega, \mathscr{F}, \prob)$, 
with $\Omega$ the sample space, $\mathscr{F}$ its associated $\sigma$-algebra 
and $\prob$ the probability measure. Let $y : \Omega \rightarrow \Re^n$ be a 
random vector defined on $(\Omega, \mathscr{F}, \prob)$. With some abuse of 
notation we will write $y \in
\mathbb{R}^n$ to state the dimension of this random vector. Let $\prob_y$ 
denote the distribution of $y$, \ie{} $\prob_y(A) = \prob[y \in A]$, then a 
trajectory $\{y_i\}_{i = 1}^{N}$ of independent and identically distributed 
(\iid) copies of $y$ is defined by the distribution it induces. That is, 
for any $A_0, \ldots, A_N \in \mathscr{F}$ we define $\prob_y(A_0 \times \dots 
\times A_N) \dfn \prob[y_0 \in A_0 \land \dots \land y_N \in A_N] = 
\prod_{i=0}^{N} \prob_y(A_i)$. This definition can be extended to infinite 
trajectories $\{y_i\}_{i\in \N}$ by Kolmogorov's existence theorem 
\cite{Billingsley1995}. We will write the expectation operator as $\E$. 
We denote by $\E\left[ y \mid z \right]$ the conditional 
expectation with respect to $z$. For matrices we will use $\nrm{\cdot}$ to 
denote the spectral norm and $\nrm{\cdot}_F$ to denote the Frobenius norm. 
For a linear matrix operator $\F: \Re^{n\times n} \rightarrow \Re^{n\times n}$ 
we similarly use $\nrm{\F}$ to denote the operator-norm defined as 
$\nrm{\F} \dfn \max_{\nrm{X} \leq 1} \nrm{\F(X)}$.

\section{PROBLEM STATEMENT} \label{sec:prob}
In this section, we describe the problem statement and state the main
result.
\subsection{LQR for systems with multiplicative noise}
This paper considers linear systems with input- and state-multiplicative noise
given by:
\begin{equation} \label{eq:dyn}
   x_{k+1} = A(w_k)x_k + B(w_k)u_k,
\end{equation}
with $A(w) \dfn A_0 + \ssum_{i=1}^{n_w} w^{(i)}A_i$ and 
$B(w) \dfn B_0 + \ssum_{i=1}^{n_w} w^{(i)} B_i$, where at each time $k$,
$x_k \in \Re^{n_x}$ denotes the state, $u_k \in \Re^{n_u}$ the input and
$w_k \in \Re^{n_w}$ an \iid{} copy of a square integrable random vector $w$
distributed according to $\prob_w$. We use $w^{(i)}$ to denote the $i$'th 
element of $w$. We introduce the following shorthands:
$\Am \dfn \trans{\smallmat{
   \trans{A_0} & \trans{A_1} & \ldots & \trans{A_{n_w}}
}}$,
$\Bm \dfn \trans{\smallmat{
   \trans{B_0} & \trans{B_1} & \ldots & \trans{B_{n_w}}
}}$ and define $\Sig = \smallmat{1 & 0 \\ 0 & \Sigma}$, where we assume
that $\E[w] = 0$ and $\E[w \trans{w}] = \Sigma$. The $\#$ operator, when applied
to a matrix, then denotes
the block transpose, \ie{} $\Am^\# \dfn \trans{\smallmat{
   A_0 & A_1 & \ldots & A_{n_w}
}}$

The primary goal is to study solutions of the following
stochastic LQR problem:
\begin{equation} \label{eq:lqr}
   \begin{aligned}
      & \minimize_{u_0, u_1, \ldots} & & \E \left[ \ssum_{k=0}^\infty 
      \trans{x_k} Q x_k + \trans{u}_k R u_k \right] \\
      & \stt & & x_{k+1} = A(w_k) x_k + B(w_k) u_k, \quad k \in \N \\
      &     & & x_0\ \textrm{given}
   \end{aligned}
\end{equation}
where we assume that $Q \sgt 0$ and $R \sgt 0$.\footnote{This assumption
is not strictly necessary, see \cite{Chen1998} for some discussion.
} The solution of \eqref{eq:lqr} will yield
a controller that renders the closed-loop system exponentially mean square stable (\emss)
\cite[Definition 1]{Coppens2019}. Note that for the dynamics in \eqref{eq:dyn} 
\mss{} is equivalent to \emss{} \cite[Theorem 2]{Coppens2019}. Therefore we
will say a system is \mss{} throughout the paper, thereby also implying it is \emss{}.

The solution of \eqref{eq:lqr} is then described by the 
following result \cite[Proposition 3]{Coppens2019}:
\begin{proposition}[LQR control synthesis] \label{prob:lqr}
   Consider a system with dynamics \eqref{eq:dyn} and the associated LQR problem
   \eqref{eq:lqr}. Assuming that \eqref{eq:dyn} is mean square stabilizable, 
   \ie{} there exists a $K$, such that the closed-loop system 
   $x_{k+1} = (A(w_k) + B(w_k)K)x_k$ is \mss{}, then the following statements
   holds.
   \begin{theoremcontents}
      \item The optimal solution of $\eqref{eq:lqr}$ is given by $K^\star = 
      -(R+\G(P^\star))^{-1}\H(P^\star)$, with $P^\star$ the solution of the following
      Riccati equation:
      \begin{align}
         \ric(P^\star, \Sig) &\dfn P^\star - Q - \F(P^\star) \nonumber \\ &+ 
         \trans{\H(P^\star)}(R + \G(P^\star))^{-1} \H(P^\star) = 0, \label{eq:ric}
      \end{align}
      with the linear maps $\F(P)$, $\G(P)$, $\H(P)$ defined in 
      Table~\ref{tab:constants}.
      \item The controller $K^\star$ renders \eqref{eq:dyn} \mss{} in closed-loop.
      \item The optimal cost is given by \\ $\cost{x_0} =
      \E [\ssum_{k=0}^\infty \trans{x_k} (Q + \trans{{K^\star}}RK^\star) x_k] = 
      \trans{x_0} P^\star x_0$. 
   \end{theoremcontents}
\end{proposition}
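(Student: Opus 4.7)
My plan is to establish all three parts via dynamic programming, guessing a quadratic value function and then deriving the Riccati equation through Bellman's principle of optimality. This is the natural strategy because the stage cost is quadratic in $(x_k, u_k)$ and the dynamics are linear in $(x_k, u_k)$ with random coefficients, so the problem is self-similar across time in the usual LQR sense.

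First I would postulate $V(x) = \trans{x} P x$ as the optimal cost-to-go and examine Bellman's equation
\[
V(x) = \min_u \bigl\{\trans{x} Q x + \trans{u} R u + \E[V(A(w)x + B(w)u)]\bigr\}.
\]
Expanding $A(w) = A_0 + \ssum_{i=1}^{n_w} w^{(i)} A_i$ and likewise for $B(w)$, and using $\E[w]=0$ together with $\E[w\trans{w}]=\Sigma$, the bracketed expression collapses to
\[
\trans{x}\bigl(Q + \F(P)\bigr)x + 2\trans{x}\trans{\H(P)}u + \trans{u}\bigl(R + \G(P)\bigr)u,
\]
where $\F$, $\G$, $\H$ are exactly the operators collected in Table~\ref{tab:constants}. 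The zero-mean assumption kills all cross terms that are linear in a single $w^{(i)}$, while the covariance $\Sigma$ absorbs the quadratic-in-$w$ contributions.

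Next I would complete the square in $u$. Because $R \sgt 0$ and $\G(P) \sgeq 0$, the quadratic form in $u$ is strictly convex and attains its unique minimum at $u^\star = -(R+\G(P))^{-1}\H(P)x$, which is the linear feedback $K^\star x$ claimed in (i). Substituting this back and matching the coefficient of $\trans{x}(\cdot)x$ on both sides of Bellman's equation forces $P$ to satisfy $\ric(P,\Sig) = 0$. For part (ii) I would use $V(x) = \trans{x}P^\star x$ as a stochastic Lyapunov function: the Riccati identity rearranges to
\[
\E[V(x_{k+1}) \mid x_k] - V(x_k) = -\trans{x_k}(Q + \trans{K^\star}RK^\star)x_k
\]
along the closed-loop trajectory, and since $Q + \trans{K^\star}RK^\star \sgt 0$ and $P^\star \sgt 0$, standard discrete-time stochastic Lyapunov arguments deliver \mss{}. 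For part (iii) I would telescope this identity from $k=0$ to $k=N-1$, take expectations, and exploit \mss{} to drive $\E[V(x_N)] \to 0$ as $N \to \infty$, yielding $\cost{x_0} = \trans{x_0} P^\star x_0$.

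The main obstacle is establishing existence of a stabilizing $P^\star \sgt 0$ satisfying $\ric(P^\star, \Sig) = 0$ under only the mean square stabilizability hypothesis, since the Bellman-based argument above is conditional on such a $P^\star$. The canonical remedy is a monotone value-iteration argument: define a sequence of Bellman iterates starting from the zero matrix, use any mean-square-stabilizing $K$ to exhibit a uniform upper bound on the finite-horizon cost, then pass to a monotone limit. Since this is precisely the content of \cite[Proposition~3]{Coppens2019}, I would invoke it directly rather than reprove it here.
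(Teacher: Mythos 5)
The paper does not actually prove this proposition: it is imported verbatim from the authors' earlier work, cited as \cite[Proposition~3]{Coppens2019}, so there is no in-paper argument to compare against. Your dynamic-programming sketch is the standard and correct route to such a result: the expectation computation using $\E[w]=0$ and $\E[w\trans{w}]=\Sigma$ does collapse the Bellman operand to $\trans{x}(Q+\F(P))x + 2\trans{x}\trans{\H(P)}u + \trans{u}(R+\G(P))u$ with the operators of Table~\ref{tab:constants} (note $\Sig = \smallmat{1 & 0\\ 0 & \Sigma}$ packages the deterministic $A_0,B_0$ terms together with the noise terms), the completion of squares gives $K^\star$ and the Riccati equation, the identity $P^\star = Q + \trans{K^\star}RK^\star + \Fo(P^\star)$ yields the stochastic Lyapunov decrease for (ii), and telescoping gives (iii). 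Two remarks. First, your telescoping argument establishes that the closed-loop cost under $K^\star$ equals $\trans{x_0}P^\star x_0$, but optimality in (i) also needs the reverse inequality against arbitrary admissible policies; this follows from the same Bellman inequality ($\E[V(x_{k+1})\mid x_k,u_k] \geq V(x_k) - \trans{x_k}Qx_k - \trans{u_k}Ru_k$ for every $u_k$, telescoped and using $V\geq 0$), and is worth stating. Second, you correctly identify that existence of a stabilizing positive definite solution under mean square stabilizability is the only genuinely nontrivial step, and you defer it to \cite{Coppens2019} --- which is exactly what the paper itself does for the entire proposition, so your sketch and the paper ultimately rest on the same external result; you have simply made explicit the parts that are routine.
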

The goal of this paper is then to consider the effect of misestimation of
$\Sigma$ on the closed-loop cost. 
More specifically we will operate under the following assumption
\begin{assumption} \label{ass:covar}
   Let $\Sigh = \Sig + \Sigd$ be some estimator of $\Sig$,
   using $\nsample$ samples of the random vector $w$. We will assume it
   satisfies the following:
   \begin{equation} \label{eq:confsig}
      \lbsig \Sig \sleq \Sigd \sleq \ubsig \Sig,
   \end{equation}
   where $-1 \leq \lbsig \leq 0 \leq \ubsig = 
   \mathcal{O}\left(1/\sqrt{N}\right)$. 
\end{assumption} 
This assumption is valid with high probability when 
$\Sigh = \smallmat{1 & 0 \\ 0 & \hat{\Sigma}}$
where $\hat{\Sigma} = \ssum_{i=1}^{\nsample} w_i \trans{w_i}$ and under some 
additional assumptions on $w$, which are stated in Section~\ref{sec:pre}. 
It is also applicable for the case where the mean is also unknown
and estimated as the sample mean. The constants $\lbsig$ and $\ubsig$ depend
on $N$, which is made explicit by using bold symbols.

\section{MAIN RESULT} \label{sec:main}
Starting from this assumption we will study the optimal controller
produced by applying \Cref{prob:lqr} for $\Sigma_0$ and $\hat{\Sigma}_0$ which 
we will denote as $K^*$ (\emph{nominal controller}) and $\hat{K}$ 
(\emph{empirical controller}) respectively. The goal is then to quantify
the difference between $\cost{\xinit}$ and $\costact{\xinit}$. To do so we
study how the perturbation on $\Sigma_0$ propagates through the controller 
synthesis in three stages. The first stage is how the solution of the Riccati
equation is perturbed, which is quantified in Theorem~\refta{}.
The second stage is the perturbation of the control gain, quantified in
Theorem~\reftb{}. The final stage is then the suboptimality, quantified in
Theorem~\reftc{}. 

We state these theorems for a system with dynamics \eqref{eq:dyn}, 
with $\E[w] = 0$ and 
$\E[w \trans{w}] = \Sigma$ and $K^\star$ the optimal controller and $P^\star$ 
the solution of \eqref{eq:ric}. Then assume we have some $\Sigh = \Sig + \Sigd$
which satisfies Assumption~\ref{ass:covar} and denote by $\hat{K}$ the 
optimal controller for $\Sigh$ and $\hat{P}$ the solution of \eqref{eq:ric}. 
The constants used in the theorems below are listed in Table~\ref{tab:constants}.

\begin{table*}[t]
   {\footnotesize
   \caption{Overview of system constants}\label{tab:constants}
   \begin{center}
   \begin{tabular}{AAA} 
   \toprule
   \multicolumn{2}{l}{System Matrices:} &
   \Lm &= \Am + \Bm K^\star & \Sm &= \Bm R^{-1} \Bt \\
   \multicolumn{2}{l}{Spectra:} &
   \mu_{P}^\star &= \min \sigma(P^\star) & \mu_{R} &= \min \sigma(R) \\
   \midrule
   \multicolumn{2}{l}{Operators} & \multicolumn{2}{l}{Sensitivities} &
   \multicolumn{2}{l}{Offsets} \\
   \midrule
   \F&:\Symm^{n_x} \to \Symm^{n_x}: P \mapsto \At(\Sig \otimes P)\Am & 
      \kappaA &= \nrm{\F}   & \etaA &= \nrm{\F(P^\star)} \\
   \G&:\Symm^{n_x} \to \Symm^{n_u}: P \mapsto \Bt(\Sig \otimes P)\Bm  &
      \kappaB &= \nrm{\G}   & \etaB &= \nrm{\G(P^\star)} \\
   \H&: \Symm^{n_x} \to \Re^{n_u \times n_x}:  P \mapsto \Bt(\Sig \otimes P)\Am & 
      \kappaAB &= \nrm{\Am}\nrm{\Bm}\nrm{\Sig} & \etaAB&=\nrm{\Am}\nrm{\Bm}\nrm{\Sig}\nrm{P^\star} \\ 
   \cdashlinelr{1-6}
   \Fo&:\Symm^{n_x} \to \Symm^{n_x}: P \mapsto \Lt(\Sig \otimes P)\Lm & 
      \kappaL &= \nrm{\Fo} & \etaL &= \nrm{\Fo(P^\star)} \\
   \Ho&: \Symm^{n_x} \to \Re^{n_u \times n_x}:  P \mapsto \Bt(\Sig \otimes P)\Lm & 
      \kappaLB &= \nrm{\Lm}\nrm{\Bm}\nrm{\Sig}   & \etaLB &= \nrm{\Lm}\nrm{\Bm}\nrm{\Sig}\nrm{P^\star} \\ 
   \lyapcl&: \Symm^{n_x} \to \Symm^{n_x}:  P \mapsto  P - \Lt (\Sig \otimes P) \Lm &
      \etaLyap &= \nrm{\lyapcl^{-1}}^{-1} & \\
   \cdashlinelr{1-6}
   \Gc&:\Symm^{n_u} \to \Symm^{n_x}: X \mapsto \trans{{\Bm^\#}}(\Sig \otimes X)\Bm^\# &
      \kappaBc &= \nrm{\Gc} & \\
   \Hc&: \Re^{n_u \times n_x} \to \Re^{n_x \times n_u}:  X \mapsto \trans{{\Bm^\#}}(\Sig \otimes X)\Lm^\# & 
      \etaLBc &= \nrm{\Am^\#}\nrm{\Bm^\#}\nrm{\Sig} & \\ 
   \lyaphsa&: \Symm^{n_x} \to \Symm^{n_x}:  X \mapsto  X - \trans{{\Lm^\#}}(\Sig \otimes X) \Lm^\# &
      \etaLyaph &= \nrm{\lyaphsa^{-1}}^{-1} & \\
   \midrule
   \multicolumn{2}{l}{Other:} &
   \kappaS &= \nrm{\Lm}^2 \nrm{\Sm} \nrm{\Sig}^2 & \etaBR &= \nrm{\G(P^\star) + R} \\
   \bottomrule
   \end{tabular}
   \end{center}
   }
\end{table*}

\begin{theorem}[Riccati Perturbation] \label{the:main:a}
   The distance between the 
   solutions of the Riccati equations $P^*$ and $\hat{P}$ for covariances
   $\Sig$ and $\Sigh$ respectively is bounded as follows:
   \begin{equation} \label{eq:ricp}
      \nrm{P^\star - \hat{P}} \leq \tfrac{\etaLyap -  \kappaA \sigm
   -\sqrt{(\etaLyap - \kappaA \sigm)^2 - 4 \etaA \kappaS \sigm}}
   {2 \kappaS},
   \end{equation}
   with $\sigm = \max\{|\lbsig|, |\ubsig|\}$. This bound holds as long as the 
   following conditions are satisfied:
   \begin{equation} \label{eq:ricpcond}
      \sigm \leq  
      \tfrac{2 \etaA \kappaS + \kappaA \etaLyap - 2 \sqrt{\etaA \kappaS
      (\etaA\kappaS + \kappaA \etaLyap)}}{\kappaA^2},
   \end{equation}
   and $\sigm$ sufficiently small such that the right side of
   \eqref{eq:ricp} is smaller than $\mu_{P}^\star = \min \sigma(P^\star)$. 
\end{theorem}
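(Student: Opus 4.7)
The plan is to recast the perturbed Riccati equation as a fixed-point problem for $\Delta P \dfn \hat{P} - P^\star$ and to apply \lschauder{} on an invariant ball whose radius coincides with the claimed bound in \eqref{eq:ricp}. First I would subtract $\ric(P^\star, \Sig) = 0$ from $\ric(\hat{P}, \Sigh) = 0$ and organize the difference into (i) the linearization in $\Delta P$ at fixed $\Sig$, and (ii) all remaining residuals in $\Sigd \dfn \Sigh - \Sig$ and higher powers of $\Delta P$. The crucial algebraic fact is that the Fr\'echet derivative of $\ric(\cdot, \Sig)$ at $P^\star$ equals the closed-loop Lyapunov operator $\lyapcl$: the derivative of the Schur-complement-like term $\trans{\H(P)}(R+\G(P))^{-1}\H(P)$ produces, via $K^\star = -(R+\G(P^\star))^{-1}\H(P^\star)$, exactly the cross-terms needed to combine with $\F(\Delta P)$ into $\Fo(\Delta P) = \Lt(\Sig \otimes \Delta P)\Lm$. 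Since $\lyapcl$ is invertible by \mss{} of the nominal closed loop (\Cref{prob:lqr}\emph{(ii)}) with $\nrm{\lyapcl^{-1}} = 1/\etaLyap$, the equation rearranges to $\Delta P = \lyapcl^{-1}(T(\Delta P, \Sigd))$ for a residual $T$ collecting all nonlinear pieces.

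Second, I would bound $\nrm{T(\Delta P, \Sigd)}$ by splitting the residual into three parts: (a) the piece linear in $\Sigd$ and evaluated at $P^\star$, controlled by $\etaA \sigm$ via the positive-semidefinite tensorization $\lbsig (\Sig \otimes P^\star) \sleq \Sigd \otimes P^\star \sleq \ubsig (\Sig \otimes P^\star)$; (b) the bilinear $\Sigd$-$\Delta P$ piece, controlled by $\kappaA \sigm \nrm{\Delta P}$ through the operator norms of $\F$, $\G$, $\H$; and (c) the $\Delta P$-quadratic residue from the Schur complement, handled via a Neumann expansion of $(R+\G(\hat{P}))^{-1}$ around $(R+\G(P^\star))^{-1}$ and controlled by $\kappaS \nrm{\Delta P}^2$, where the closed-loop form $\kappaS = \nrm{\Lm}^2\nrm{\Sm}\nrm{\Sig}^2$ emerges after rewriting the remainder through $K^\star$ and $\Sm = \Bm R^{-1}\Bt$. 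Dividing by $\etaLyap$, invariance of the closed ball of radius $r$ under $\lyapcl^{-1}\circ T(\cdot, \Sigd)$ becomes the quadratic inequality $\kappaS r^2 - (\etaLyap - \kappaA \sigm) r + \etaA \sigm \leq 0$; its smaller real root is exactly \eqref{eq:ricp}, and non-negativity of the discriminant, viewed as a quadratic in $\sigm$, yields \eqref{eq:ricpcond}. \lschauder{} on this compact convex invariant ball produces a fixed point, and the requirement that $r$ be smaller than $\mu_{P}^\star$ ensures $\hat{P} \sgt 0$, identifying the fixed point with $\hat{P} - P^\star$ by uniqueness of the stabilizing Riccati solution.

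The main obstacle I expect is the Schur-complement expansion: the Neumann series for $(R + \G(\hat{P}))^{-1}$ requires $R + \G(\hat{P}) \sgt 0$ \emph{uniformly} over the invariant ball, not merely at the fixed point, so one must verify that the smallness conditions \eqref{eq:ricp}--\eqref{eq:ricpcond} already enforce this uniformly and thereby close what would otherwise be a circular dependence. A subsidiary point of care is that the $\G$- and $\H$-related constants $\etaB$, $\etaAB$ get absorbed into $\kappaS$ only after the closed-loop rewriting of the Schur-complement remainder through $K^\star$ and $\Lm$, which is what keeps the final bound expressed in only the four system constants $\etaA, \kappaA, \kappaS, \etaLyap$.
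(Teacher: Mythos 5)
Your proposal follows essentially the same route as the paper: the perturbed Riccati equation is recast as $\Delta P = \lyapcl^{-1}(\cdot)$ using the invertible closed-loop Lyapunov operator, the residual is bounded by $\etaLyap^{-1}(\sigm(\etaA + \kappaA \epsP) + \kappaS\epsP^2)$ with exactly the same three-part split, the invariance condition becomes the quadratic $\kappaS r^2 - (\etaLyap - \kappaA\sigm)r + \etaA\sigm \leq 0$ whose smaller root and discriminant give \eqref{eq:ricp} and \eqref{eq:ricpcond}, and the \lschauder{} closes the argument on the ball intersected with $\{\Delta P : P^\star + \Delta P \sgeq 0\}$. The only divergence is technical: where you propose a Neumann expansion of $(R+\G(\hat{P}))^{-1}$ and rightly flag uniform invertibility over the ball as the main obstacle, the paper instead rewrites the Riccati operator via the matrix inversion lemma and applies Loewner-order inequalities of the form $(I+MN)^{-1}M \sleq M$ (its Lemma~\ref{lem:ineq}), which require only $P^\star + \Delta P \sgeq 0$ --- guaranteed on the invariant set by the constraint $\epsP \leq \mu_{P}^\star$ --- so the obstacle dissolves without any series-convergence argument.
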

\begin{proof}
   See Section~\ref{sec:ric} for the proof.
\end{proof}
\begin{theorem}[Controller Perturbation] \label{the:main:b}
   The distance between the optimal 
      controllers for $\Sig$ and $\Sigh$ is bounded as follows:
      \begin{equation} \label{eq:ctrp}
         \nrm{K^\star - \hat{K}} \leq \tfrac{1}{\mu_R}[(1 + \sigm)\epsP
         \kappaK + \sigm \etaK],
      \end{equation}
      with $\epsP$ the right-hand side of \eqref{eq:ricp}, $\kappaK = \kappaB
      \nrm{K^\star} + \kappaAB$, $\etaK = \etaB \nrm{K^\star} + \etaAB$ and $\mu_R = 
      \min \sigma(R)$. 
\end{theorem}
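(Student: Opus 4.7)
The plan is to combine the first-order optimality conditions of the two controllers with the strong convexity induced by $\hat{\G}(\hat{P}) \sgeq 0$, where $\hat{\G}, \hat{\H}$ denote the operators $\G, \H$ from Table~\ref{tab:constants} with $\Sig$ replaced by $\Sigh$. Both controllers satisfy normal equations, $(R + \G(P^\star))K^\star = -\H(P^\star)$ and $(R + \hat{\G}(\hat{P}))\hat{K} = -\hat{\H}(\hat{P})$. Solving the latter for $\hat{K}$ and then adding and subtracting $(R + \G(P^\star))K^\star$ inside $(R + \hat{\G}(\hat{P}))K^\star + \hat{\H}(\hat{P})$ gives the central identity
\begin{equation*}
   K^\star - \hat{K} = (R + \hat{\G}(\hat{P}))^{-1}\bigl[(\hat{\H}(\hat{P}) - \H(P^\star)) + (\hat{\G}(\hat{P}) - \G(P^\star))K^\star\bigr].
\end{equation*}
Because $\hat{P} \sgeq 0$ and $\Sigh \sgeq 0$ (the latter following from $\lbsig \geq -1$ in Assumption~\ref{ass:covar}), we have $\hat{\G}(\hat{P}) \sgeq 0$, hence $\nrm{(R + \hat{\G}(\hat{P}))^{-1}} \leq 1/\mu_R$. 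Factoring out the \emph{perturbed} Hessian rather than the nominal one is the crucial structural choice: it makes the already-controlled $\nrm{K^\star}$ (rather than $\nrm{\hat{K}}$) appear in the cross-term, eliminating the need for an a priori bound on $\nrm{\hat{K}}$.

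The remaining task is to bound the two bracketed perturbations via the two-term identity $\Sigh \otimes \hat{P} - \Sig \otimes P^\star = \Sig \otimes \Delta P + \Sigd \otimes \hat{P}$, where $\Delta P \dfn \hat{P} - P^\star$. For $\hat{\H}(\hat{P}) - \H(P^\star)$, the first piece is $\H(\Delta P)$, bounded by $\kappaAB \epsP$ using Theorem~\refta{} and the definition of $\kappaAB$; the second piece, $\Bt(\Sigd \otimes \hat{P})\Am$, is bounded by $\sigm(\etaAB + \kappaAB \epsP)$ using $\nrm{\Sigd} \leq \sigm \nrm{\Sig}$ (a consequence of $-\sigm \Sig \sleq \Sigd \sleq \sigm \Sig$) together with $\nrm{\hat{P}} \leq \nrm{P^\star} + \epsP$. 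Summing gives $\nrm{\hat{\H}(\hat{P}) - \H(P^\star)} \leq (1+\sigm)\kappaAB \epsP + \sigm \etaAB$. For $\hat{\G}(\hat{P}) - \G(P^\star)$, the first piece gives $\kappaB \epsP$ by the operator-norm definition of $\kappaB$, and the second, $\Bt(\Sigd \otimes \hat{P})\Bm$, is bounded by $\sigm \nrm{\G(\hat{P})} \leq \sigm(\etaB + \kappaB \epsP)$ via the PSD sandwich $-\sigm \G(\hat{P}) \sleq \Bt(\Sigd \otimes \hat{P})\Bm \sleq \sigm \G(\hat{P})$, which is valid because $\hat{P} \sgeq 0$ and Kronecker products of PSD matrices are PSD. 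This yields $\nrm{\hat{\G}(\hat{P}) - \G(P^\star)} \leq (1+\sigm)\kappaB \epsP + \sigm \etaB$.

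Substituting these two estimates into the identity, factoring $\epsP$ and $\sigm$, and collecting multipliers of $\nrm{K^\star}$ directly produces $\kappaK = \kappaB\nrm{K^\star} + \kappaAB$ and $\etaK = \etaB\nrm{K^\star} + \etaAB$, yielding \eqref{eq:ctrp}. I expect the only real obstacle to be arranging the algebraic identity so that $K^\star$ (rather than $\hat{K}$) appears in the cross-term; once the identity is in the form above, the remaining estimates are standard applications of submultiplicativity and of the PSD monotonicity inherited from the Kronecker-product structure.
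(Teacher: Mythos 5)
Your proof is correct, and everything after the first step coincides with the paper's argument: the two-term splitting $\Sigh\otimes\hat P-\Sig\otimes P^\star=\Sig\otimes\Delta P+\Sigd\otimes\hat P$, the use of $\nrm{\Sigd}\le\sigm\nrm{\Sig}$ and of the PSD sandwich from the Kronecker structure (the paper's Lemmas~\ref{lem:bndineq} and~\ref{lem:bnd2}), and the resulting constants $(1+\sigm)\kappaB\epsP+\sigm\etaB$ and $(1+\sigm)\kappaAB\epsP+\sigm\etaAB$ are exactly those appearing in the paper's Lemma~\ref{lem:gradient_bound}. The one structural difference is in how you convert these perturbation estimates into a bound on $\nrm{K^\star-\hat K}$: the paper views $K^\star x$ and $\hat K x$ as minimizers of the $\mu_R$-strongly convex quadratics \eqref{eq:cost}--\eqref{eq:cost_est} and invokes the abstract perturbation lemma \cite[Lemma 1]{Mania2019}, whereas you write out the normal equations and derive the closed-form identity
\begin{equation*}
   K^\star-\hat K=(R+\Bt(\Sigh\otimes\hat P)\Bm)^{-1}\bigl[(\Bt(\Sigh\otimes\hat P)\Am-\H(P^\star))+(\Bt(\Sigh\otimes\hat P)\Bm-\G(P^\star))K^\star\bigr],
\end{equation*}
together with $\nrm{(R+\Bt(\Sigh\otimes\hat P)\Bm)^{-1}}\le 1/\mu_R$. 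Since the cost functions are quadratic, your identity is precisely the exact form of the strong-convexity inequality evaluated at $u^\star=K^\star x$, so the two routes are mathematically equivalent; what yours buys is self-containedness (no external convex-analysis lemma) and it makes transparent why $\nrm{K^\star}$, rather than $\nrm{\hat K}$, appears in the cross-term --- a point the paper handles by choosing to evaluate the gradient difference at $u^\star$. The only hypotheses you lean on implicitly are $\hat P\sgeq 0$ (guaranteed because $\hat P$ solves the Riccati equation with $Q\sgt 0$, or alternatively via $\epsP\le\mu_P^\star$ from Theorem~\refta{}) and $\Sigh\sgeq 0$ from $\lbsig\ge-1$; both hold, so there is no gap.
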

\begin{proof}
   See Section~\ref{sec:ctr} for the proof.
\end{proof}
\begin{theorem}[Suboptimality] \label{the:main:c}
   The difference between the optimal cost 
   $\cost{x_0}$ and the closed-loop cost achieved when applying $\hat{K}$ 
   to the true system, denoted by $\costact{x_0}$, is bounded as follows:
   \begin{align} 
      &\costact{\xinit} - \cost{\xinit}  \nonumber \\
      & \quad \leq 
      {\nm \epsK^2 \etaBR \nrm{\xinit}^2} \tfrac{1}{
      \etaLyaph}\left[1 +  \tfrac{2 \etaLBc \epsK
      + \kappaBc \epsK^2}
      {\etaLyaph - 2 \etaLBc \epsK 
      - \kappaBc \epsK^2}\right],\label{eq:regret}
   \end{align}
   where $\bar{n} = \min(n_x, n_u)$ and $\epsK$ the right-side
   of \eqref{eq:ctrp}. The bound in \eqref{eq:regret} is valid as long as:
   \begin{equation} \label{eq:stab}
      \epsK < \tfrac{\etaLyaph}{\etaLBc + \sqrt{{\etaLBc}^2 +
      \kappaBc \etaLyaph}}.
   \end{equation}
\end{theorem}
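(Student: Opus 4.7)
The plan is to combine a Bellman-type performance-difference identity with an operator-level Neumann series applied to the closed-loop Lyapunov operator. For any m.s.\ stabilising gain $K$, the induced closed-loop cost matrix $P_K$ solves the Lyapunov equation $P_K = Q + \trans K R K + \lyap_K(P_K)$, where $\lyap_K(P) \dfn \F(P) + \trans K \H(P) + \trans{\H(P)} K + \trans K \G(P) K$. Subtracting the Riccati identity~\eqref{eq:ric} for $P^\star$ and completing the square using the first-order condition $\H(P^\star) = -(R + \G(P^\star))K^\star$ eliminates the cross-terms in $K^\star$ and yields the clean identity
\begin{equation*}
   (I - \lyap_{\hat K})(\hat P_{\mathrm{cl}} - P^\star) = \trans{\Delta K}(R + \G(P^\star))\,\Delta K,
\end{equation*}
where $\Delta K \dfn \hat K - K^\star$ and $\hat P_{\mathrm{cl}} \dfn P_{\hat K}$. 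Pairing both sides against $\xinit \trans{\xinit}$ in the Frobenius inner product and passing to adjoints rewrites the suboptimality as
\begin{equation*}
   \costact{\xinit} - \cost{\xinit} = \operatorname{tr}\bigl(X_{\hat K}\, \trans{\Delta K}(R + \G(P^\star)) \Delta K\bigr),
\end{equation*}
with $X_{\hat K} \dfn (I - \lyap_{\hat K}^*)^{-1}(\xinit \trans{\xinit})$ the closed-loop state Gramian driven by $\xinit$.

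The second step is an elementary upper bound on this trace. Since both factors are positive semidefinite, $\operatorname{tr}(AB) \leq \nrm{A}\operatorname{tr}(B)$; combined with $\operatorname{tr}(\trans{\Delta K} M \Delta K) = \operatorname{tr}(M \Delta K \trans{\Delta K}) \leq \nrm{M} \nrm{\Delta K}_F^2 \leq \nm\, \etaBR\, \epsK^2$ (using $\nrm{\Delta K} \leq \epsK$ from Theorem~\reftb{} together with $\nrm{\cdot}_F^2 \leq \nm \nrm{\cdot}^2$ for matrices in $\Re^{n_u \times n_x}$), one obtains
\begin{equation*}
   \costact{\xinit} - \cost{\xinit} \leq \nm\, \etaBR\, \epsK^2\, \nrm{(I - \lyap_{\hat K}^*)^{-1}}\,\nrm{\xinit}^2.
\end{equation*}

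The principal task is then to bound $\nrm{(I - \lyap_{\hat K}^*)^{-1}}$. Write $I - \lyap_{\hat K}^* = \lyaphsa - \Delta \op L$ with $\Delta \op L \dfn \lyap_{\hat K}^* - \lyap_{K^\star}^*$. Using the block-transpose identity $(\Bm \Delta K)^\# = (I_{n_w+1} \otimes \trans{\Delta K})\Bm^\#$ together with the Kronecker rule $(A \otimes B)(C \otimes D) = AC \otimes BD$, the perturbation expands as
\begin{equation*}
   \Delta \op L(X) = \Hc(\Delta K X) + \trans{\Hc(\Delta K X)} + \Gc(\Delta K X \trans{\Delta K}),
\end{equation*}
so that $\nrm{\Delta \op L} \leq 2 \etaLBc \epsK + \kappaBc \epsK^2$. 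Under \eqref{eq:stab} this quantity is strictly smaller than $\etaLyaph$, so the Neumann series for $(I - \lyap_{\hat K}^*)^{-1} = (I - \lyaphsa^{-1}\Delta \op L)^{-1}\lyaphsa^{-1}$ converges and produces $\nrm{(I - \lyap_{\hat K}^*)^{-1}} \leq (\etaLyaph - 2\etaLBc \epsK - \kappaBc \epsK^2)^{-1}$. Substituting this into the previous display and rewriting the inverse as $\etaLyaph^{-1}\bigl[1 + (2 \etaLBc \epsK + \kappaBc \epsK^2)/(\etaLyaph - 2 \etaLBc \epsK - \kappaBc \epsK^2)\bigr]$ reproduces exactly \eqref{eq:regret}. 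As a byproduct, the very invertibility of $I - \lyap_{\hat K}^*$ under \eqref{eq:stab} certifies mean-square stability of the true system under the empirical gain $\hat K$.

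The main obstacle is the operator-algebraic expansion of $\Delta \op L$: matching the linear cross-terms and the quadratic term with the operators $\Hc$ and $\Gc$ of Table~\ref{tab:constants} requires careful book-keeping with the block-transpose operator and two Kronecker identities. Everything else reduces to a standard completion of squares, the elementary trace inequality for positive semidefinite factors, and a resolvent estimate.
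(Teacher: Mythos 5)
Your proof is correct and follows essentially the same route as the paper: the performance-difference identity you derive by completing the square is exactly the one the paper imports from Gravell et al., the trace estimate is identical, and your Neumann-series resolvent bound on the perturbed closed-loop Lyapunov operator is the same computation as the paper's Banach fixed-point argument in Lemmas~\ref{lem:lyappb} and~\ref{lem:lyapfpb} (the fixed-point map there is affine, so its iteration is precisely your Neumann series), producing identical constants and the same reading of condition \eqref{eq:stab}. The only cosmetic difference is that you bound the operator norm of the inverse directly rather than the norm of the Gramian perturbation $\Delta X_\infty$, which changes nothing in the final bound.
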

\begin{proof}
   See Section~\ref{sec:reg} for the proof.
\end{proof}

The rate of decrease predicted by these theorems is then given in the Corollary
below.

\begin{corollary}[Suboptimality bound] \label{cor:subopt}
   Let $\Sigh$ satisfy Assumption~\ref{ass:covar} and let $K^\star$ denote the
   nominal controller and $\hat{K}$ the emprical controller. Then
   \begin{equation} \label{eq:regret_order}
      \costact{\xinit} - \cost{\xinit} = \mathcal{O}\left(1/\nsample \right),
   \end{equation} 
   assuming that $N$ is sufficiently large. 
\end{corollary}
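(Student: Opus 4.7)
The plan is to chain together Theorems~\refta{}, \reftb{} and \reftc{} and track how each bound scales in $\sigm = \max\{|\lbsig|,|\ubsig|\}$, which under Assumption~\ref{ass:covar} is $\mathcal{O}(1/\sqrt{\nsample})$. The target rate $\mathcal{O}(1/\nsample)$ then matches a quadratic dependence on $\sigm$ in the regret bound, so the key is to show each intermediate quantity is linear in $\sigm$ and that the final bound is quadratic in that linear quantity.

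First, I would analyse the right-hand side $\epsP$ of \eqref{eq:ricp} as $\sigm \to 0$. Writing $a = \etaLyap - \kappaA \sigm$ and $b = 4 \etaA \kappaS \sigm$, the numerator is $a - \sqrt{a^2 - b}$. A first-order Taylor expansion in $\sigm$ gives $a - \sqrt{a^2 - b} = b/(2a) + \mathcal{O}(b^2/a^3)$, so that
\[
   \epsP = \frac{\etaA \sigm}{\etaLyap} + \mathcal{O}(\sigm^2) = \mathcal{O}(\sigm).
\]
For $\nsample$ sufficiently large, condition~\eqref{eq:ricpcond} and the smallness requirement $\epsP < \mu_P^\star$ both hold, because their right-hand sides are constants independent of $\nsample$ while $\sigm \to 0$.

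Next, I would substitute this linear estimate into Theorem~\reftb{}. Since both terms in \eqref{eq:ctrp} are linear in $(\epsP, \sigm)$ and $\epsP = \mathcal{O}(\sigm)$, one obtains directly
\[
   \epsK = \tfrac{1}{\mu_R}\bigl[(1+\sigm)\epsP \kappaK + \sigm \etaK\bigr] = \mathcal{O}(\sigm),
\]
with a constant depending only on the system data in Table~\ref{tab:constants}. Again, for large $\nsample$ the stability condition~\eqref{eq:stab} in Theorem~\reftc{} is automatically satisfied because its right-hand side is a fixed positive constant while $\epsK \to 0$.

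Finally, I would feed this estimate into \eqref{eq:regret}. Since $\epsK \to 0$, the bracketed factor $1 + (2 \etaLBc \epsK + \kappaBc \epsK^2)/(\etaLyaph - 2 \etaLBc \epsK - \kappaBc \epsK^2)$ is bounded above by a constant (say $2$) for $\nsample$ large enough. The dominant prefactor is $\nm \epsK^2 \etaBR \nrm{\xinit}^2/\etaLyaph$, so
\[
   \costact{\xinit} - \cost{\xinit} = \mathcal{O}(\epsK^2) = \mathcal{O}(\sigm^2) = \mathcal{O}(1/\nsample),
\]
which is \eqref{eq:regret_order}. The main obstacle is bookkeeping: making sure the constants hidden in each $\mathcal{O}(\cdot)$ depend only on fixed system data (and not on $\nsample$), and verifying that all three ``smallness'' conditions (\eqref{eq:ricpcond}, $\epsP < \mu_P^\star$, and \eqref{eq:stab}) are simultaneously met once $\nsample$ is sufficiently large, which follows because each threshold is a strictly positive constant independent of the sample size.
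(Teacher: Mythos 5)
Your proposal is correct and follows essentially the same route as the paper: show $\epsP = \mathcal{O}(\sigm)$ from \eqref{eq:ricp}, propagate linearly through \eqref{eq:ctrp} to get $\epsK = \mathcal{O}(\sigm)$, and conclude from \eqref{eq:regret} that the suboptimality is $\mathcal{O}(\epsK^2) = \mathcal{O}(1/\nsample)$. The only cosmetic difference is that you extract the leading order of $\epsP$ by a Taylor expansion of $a - \sqrt{a^2-b}$ where the paper rationalizes the numerator and takes the limit of $\sqrt{\nsample}\,\epsP$; both yield the same constant $\etaA\sigm/\etaLyap$, and your explicit check that the three smallness conditions hold for large $\nsample$ is a welcome elaboration of the paper's ``sufficiently large'' caveat.
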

\begin{proof}
The proof is quite straightforward. Note that $\sigm = \mathcal{O}(1/\sqrt{N})$
by Assumption~\ref{ass:covar}. Let $\epsP$ denote the right-hand side of
\eqref{eq:ricp}. Evaluating the limit $\lim_{N \rightarrow \infty} \sqrt{N}\epsP $
results in
{
   \allowdisplaybreaks
\begin{align*}
   &\lim_{N \rightarrow \infty} \sqrt{N} \left(\tfrac{\etaLyap -  \kappaA \sigm
      -\sqrt{(\etaLyap - \kappaA \sigm)^2 - 4 \etaA \kappaS \sigm}}
      {2 \kappaS}\right) \\
   &= \lim_{N \rightarrow \infty} \tfrac{
         4 \etaA \kappaS (\sqrt{N} \sigm)}
      {2 \kappaS ((\etaLyap -  \kappaA \sigm
      +\sqrt{(\etaLyap - \kappaA \sigm)^2 - 4 \etaA \kappaS \sigm})} \\
   &= \frac{ 4 \etaA \kappaS c}{4 \kappaS \etaLyap},
\end{align*}
where the final equality follows from the fact that 
$\lim_{N \rightarrow \infty} \sqrt{N} \sigm = c >0$ 
(since $\sigm = \mathcal{O}(1/\sqrt{N})$ by assumption), 
which implies that the limit of the numerator is $4 \etaA \kappaS c$. 
The limit of the denominator meanwhile is $4 \kappaS \etaLyap$, since
$\lim_{N \rightarrow \infty} \sigm = 0$.
The overall limit being some positive constant then directly implies that 
$\epsP = \mathcal{O}(1/\sqrt{N})$.
}

Let $\epsK$ be the right-hand side of \eqref{eq:ctrp} then we can see
that $\epsK = \mathcal{O}(1/\sqrt{N})$, since it
depends linearly on $\sigm \epsP = \mathcal{O}(1/N)$ and $\sigm$. Finally 
\eqref{eq:regret} implies the required result since the factor in square
brackets is $\mathcal{O}(1)$ -- which can be seen by noting that the limit for
$N \rightarrow \infty$ is one -- and since $\epsK^2 = \mathcal{O}(1/N)$. 
\end{proof}
Note that the rate predicted by Corollary~\ref{cor:subopt} 
is the same as the one achieved by the {certainty
equivalent} controller for deterministic LQR \cite{Mania2019}.  

\section{PRELIMINARY RESULTS} \label{sec:pre}
In this section we provide some results that will be used throughout
the remainder of this paper. First we slightly alter a previous result from
high-dimensional statistics that results in a condition on $\Sigh$ as in 
\eqref{eq:confsig}. Second we introduce three lemmas that are related to 
bounding the operator norms of versions of $\F$, $\G$ and $\lyap$. 

\subsection{Concentration inequalities for the sample covariance}
When using the sample-covariance $\hat{\Sigma} = \ssum_{i=0}^M w_i \trans{w_i}$,
with $\left\{ w_i \right\}_{i=0}^M$ \iid{} copies of $w$,
we can find a high confidence bound of the parameters $\lbsig$ and $\ubsig$ 
under the following assumptions:
\begin{assumption} \label{ass:w}
   We assume that \begin{inlinelist}
      \item $w$ is square integrable,
      \item $w_k$ and $w_\ell$ are independent for all $k \neq \ell$,
      \item $\E[w] = 0$
      \item $\E[w \trans{w}] = \Sigma \sgt 0$ and
      \item $\Sigma^{-\nicefrac{1}{2}} w \sim \subG_{n_w}(\sigma^2)$ for
      some $\sigma \geq 1$. 
   \end{inlinelist}
\end{assumption}
Here we follow the definition of a sub-Gaussian random vector (denoted by 
$\subG$) given in \cite[Definition 5]{Coppens2019}. Condition (iv) holds for 
example for gaussian $w$ ($\sigma = 1$) and for $w$ with bounded support
(where $\sigma$ can be estimated from data \cite{Delage2010}). Under these 
assumptions we can prove a slightly altered version of 
\cite[Theorem 8]{Coppens2019}, which is stated as:
\begin{theorem} \label{the:conf}
   Let $w \in \Re^{n_w}$ be a random vector satisfying Assumption~\ref{ass:w}
   and $\hat{\Sigma}$ the sample covariance as defined above. Then with
   probability at least $1-\beta$,
   \begin{equation} \label{eq:confsigh}
      -\tsig \Sigma \leq \hat{\Sigma} - \Sigma \leq \tsig \Sigma,
   \end{equation}
   with $\tsig \dfn \tfrac{\sigma^2}{1-2\epsilon}{\Big (}\textstyle\sqrt{\tfrac
   {32q(\beta, \epsilon, n_w)}{M}} + \tfrac{2q(\beta,\epsilon,n_w)}{M}{\Big )}$,
   $\epsilon \in \interval[open]{0}{\nicefrac{1}{2}}$ chosen freely
   and $q(\beta, \epsilon, n_w) \dfn n_w \log(1 + \nicefrac{1}{\epsilon}) +
   \log(\nicefrac{2}{\beta})$. 
\end{theorem}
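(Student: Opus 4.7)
The plan is to reduce to the isotropic case by whitening and then apply a standard $\epsilon$-net / Bernstein argument on the sample covariance spectral deviation. Since the result is explicitly advertised as a mild variation of \cite[Theorem 8]{Coppens2019}, the structure parallels that reference, with the only reformulation being the translation of a spectral-norm bound into a PSD-cone sandwich inequality.

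First, I would whiten the data. Defining $\tilde{w}_i \dfn \Sigma^{-\nicefrac{1}{2}} w_i$, Assumption~\ref{ass:w}(v) gives $\tilde{w}_i \sim \subG_{n_w}(\sigma^2)$ with zero mean and identity covariance. With $\tilde{\Sigma} \dfn \tfrac{1}{M}\ssum_{i=1}^M \tilde{w}_i \trans{\tilde{w}_i} = \Sigma^{-\nicefrac{1}{2}} \hat{\Sigma} \Sigma^{-\nicefrac{1}{2}}$, the desired sandwich $-\tsig \Sigma \sleq \hat{\Sigma} - \Sigma \sleq \tsig \Sigma$ is equivalent, by congruence with $\Sigma^{-\nicefrac{1}{2}}$, to the spectral-norm bound $\nrm{\tilde{\Sigma} - I} \leq \tsig$.

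Second, I would discretize using an $\epsilon$-net $\mathcal{N}_\epsilon$ of the unit sphere $S^{n_w-1}$ of cardinality at most $(1 + \nicefrac{2}{\epsilon})^{n_w}$, combined with the symmetric-matrix inequality $\nrm{A} \leq (1-2\epsilon)^{-1} \max_{v \in \mathcal{N}_\epsilon} |\trans{v} A v|$, valid for $\epsilon \in \interval[open]{0}{\nicefrac{1}{2}}$. This accounts for the prefactor $(1-2\epsilon)^{-1}$ in the definition of $\tsig$ and explains why $\epsilon$ is a free parameter. For each fixed $v \in \mathcal{N}_\epsilon$, the scalar $\trans{v}\tilde{w}_i$ is $\subG(\sigma^2)$, so $(\trans{v}\tilde{w}_i)^2 - 1$ is a centred sub-exponential variable, and a Bernstein tail bound yields
\begin{equation*}
   \prob\bigl[\, |\trans{v} (\tilde{\Sigma} - I) v| \geq t \,\bigr] \leq 2 \exp\bigl(- c M \min(t^2/\sigma^4,\, t/\sigma^2)\bigr).
\end{equation*}
A union bound over $\mathcal{N}_\epsilon$ absorbs $\log|\mathcal{N}_\epsilon| \leq n_w \log(1 + \nicefrac{1}{\epsilon})$ together with $\log(\nicefrac{2}{\beta})$ into $q(\beta,\epsilon,n_w)$. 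Inverting the resulting tail bound at confidence $1-\beta$ and separating the sub-Gaussian regime from the sub-exponential regime produces the two summands $\sqrt{\nicefrac{32 q}{M}}$ and $\nicefrac{2q}{M}$, which after rescaling by $(1-2\epsilon)^{-1}$ match $\tsig$ exactly.

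The main obstacle is tracking the absolute constants ($32$ under the square root and $2$ in the linear term) through the Bernstein inequality, since these depend on the precise sub-Gaussian parameterization fixed in \cite[Definition 5]{Coppens2019}. Beyond constants, the only substantive departure from \cite[Theorem 8]{Coppens2019} is the final reformulation: because $\Sigma^{-\nicefrac{1}{2}} (\hat{\Sigma} - \Sigma) \Sigma^{-\nicefrac{1}{2}} = \tilde{\Sigma} - I$, the spectral bound $\nrm{\tilde{\Sigma} - I} \leq \tsig$ is equivalent to the PSD sandwich claimed in \eqref{eq:confsigh}, so no additional work is needed once the isotropic concentration bound is in hand.
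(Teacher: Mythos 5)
Your proposal is correct and follows essentially the same route as the paper, which itself only sketches the argument by deferring to \cite[Lemma A.1]{Hsu2012b} and the methodology of \cite{Delage2010}: the whitening step that turns the PSD sandwich \eqref{eq:confsigh} into the spectral bound $\nrm{\Sigma^{-\nicefrac{1}{2}}\hat{\Sigma}\Sigma^{-\nicefrac{1}{2}} - I} \leq \tsig$, followed by an $\epsilon$-net with the $(1-2\epsilon)^{-1}$ approximation factor and a sub-exponential tail bound per direction, is precisely the mechanism that produces the $q(\beta,\epsilon,n_w)$ and the two-regime form of $\tsig$. The only caveats are bookkeeping ones you already flag: the absolute constants $32$ and $2$ must be traced through the specific sub-Gaussian parameterization of \cite[Definition 5]{Coppens2019}, and the empirical covariance should carry the $\nicefrac{1}{M}$ normalization (as in your write-up) for the statement to be meaningful.
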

\begin{proof}
   The proof is a specialised version of that of \cite[Theorem 8]{Coppens2019} 
   and combines \cite[Lemma A.1.]{Hsu2012b} with the methodology of 
   \cite{Delage2010}. The major difference is that no uncertainty on the mean
   is considered and the difference $\hat{\Sigma} - \Sigma$ is bounded instead
   of simply finding an upper bound for $\hat{\Sigma}$. 
\end{proof}
Note $-\tsig \Sig \sleq \Sigd \sleq \tsig \Sig$ follows directly from
\eqref{eq:confsigh}.

\subsection{Norms of matrix operators}
We will consider bounding norms associated with $\F$ and $\G$ in two
circumstances. The first being where we have some $\Sigd$ that is constrained
by \eqref{eq:confsig}. The second being the case where we have some
$\nrm{\Delta P} \leq \epsilon$. To deal with these two cases we will use the
lemmas given below.
\begin{lemma} \label{lem:bndineq}
   Consider the matrices $\mathbf{A} \in \Re^{n_x n_w \times p}$, 
   $P \in \Symmpsd^{n_x}$, $\Sig \in \Symmpsd^{n_w}$
   and $\Sigd \in \Symm^{n_w}$, where
   $\lbsig \Sig \sleq \Sigd \sleq \ubsig \Sig$. Let $\sigm = \max\{|\lbsig|, |\ubsig|\}$
   We can then state the following bound:
   \begin{align*}
       \nrm{\At(\Sigd \otimes P)\Am}
       \leq \sigm\nrm{\At(\Sig \otimes P)\Am}.
   \end{align*}
\end{lemma}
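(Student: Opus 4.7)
The plan is to reduce the bound to three standard facts: (i) the tightening of the Loewner sandwich via $\sigm = \max\{|\lbsig|,|\ubsig|\}$, (ii) monotonicity of the Kronecker product in each argument when the other is positive semidefinite, and (iii) the elementary observation that for symmetric matrices sandwiched between $\pm c W$ with $W \sgeq 0$, the spectral norm is at most $c \nrm{W}$.

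First I would observe that since $-\sigm \leq \lbsig \leq 0 \leq \ubsig \leq \sigm$, the hypothesis $\lbsig \Sig \sleq \Sigd \sleq \ubsig \Sig$ immediately yields the symmetric bound $-\sigm \Sig \sleq \Sigd \sleq \sigm \Sig$. Next I would invoke the property that whenever $X \sleq Y$ and $Z \sgeq 0$, one has $X \otimes Z \sleq Y \otimes Z$ (a consequence of $Y - X \sgeq 0$ combined with the fact that the Kronecker product of two PSD matrices is PSD). Applying this with the pair of inequalities above and $P \sgeq 0$ gives
\[
-\sigm (\Sig \otimes P) \sleq \Sigd \otimes P \sleq \sigm (\Sig \otimes P).
\]
Conjugating this Loewner chain by $\Am$ on the right and $\At$ on the left (which preserves the order) yields
\[
-\sigm\, \At(\Sig\otimes P)\Am \sleq \At(\Sigd \otimes P)\Am \sleq \sigm\, \At(\Sig\otimes P)\Am.
\]

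The last step is to upgrade this Loewner sandwich to a spectral norm bound. Because $\Sig \sgeq 0$ and $P \sgeq 0$, the matrix $W \dfn \At(\Sig \otimes P)\Am$ is symmetric PSD, while $Z \dfn \At(\Sigd \otimes P)\Am$ is symmetric. The sandwich $-\sigm W \sleq Z \sleq \sigm W$ implies that every eigenvalue $\lambda$ of $Z$ satisfies $|\lambda| \leq \sigm \lambda_{\max}(W) = \sigm \nrm{W}$ (using $\sigm \geq 0$). Hence $\nrm{Z} \leq \sigm \nrm{W}$, which is exactly the claim.

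I do not expect any real obstacle here: the only slightly non-mechanical piece is recalling that the Kronecker product is monotone in each factor for PSD arguments, which can, if needed, be justified in one line by writing $(Y - X) \otimes Z \sgeq 0$. Everything else is standard manipulation of the Loewner order and spectral norm.
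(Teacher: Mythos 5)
Your proof is correct and follows essentially the same route as the paper's: both transfer the Loewner sandwich on $\Sigd$ through the Kronecker product with $P \sgeq 0$ and then through conjugation by $\Am$, and conclude by bounding the spectral norm of the sandwiched symmetric matrix. You merely symmetrize the bounds to $\pm\sigm$ up front and spell out the final eigenvalue step more explicitly than the paper does.
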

\begin{proof}
   From $\lbsig \Sig \sleq \Sigd \sleq \ubsig \Sig$, due to the fact that
   the eigenvalues of a kronecker product of two matrices are the products of
   the eigenvalues of the matrices, we have that
   $\lbsig \Sig \otimes P \sleq \Sigd \otimes P \sleq \ubsig \Sig \otimes P$, implying $\lbsig \At(\Sig \otimes P)\Am
   \sleq \At(\Sigd \otimes P)\Am \sleq \ubsig \At(\Sig \otimes P)\Am$, 
   which implies the required result.
\end{proof}

\begin{lemma} \label{lem:bnd2}
   Consider the matrices $\mathbf{A} \in \Re^{n_x n_w \times p}$, 
   $P \in \Symmpsd^{n_x}$ and ${\Sigma} \in \Symmpsd^{n_w}$.
   Suppose $\nrm{P} \leq
   \epsilon$ then:
   \begin{equation*}
       \nrm{\trans{\mathbf{A}}({\Sigma} \otimes P)\mathbf{A}} \leq 
       \epsilon \nrm{\trans{\mathbf{A}}(\Sigma \otimes I)\mathbf{A}}.
   \end{equation*}
\end{lemma}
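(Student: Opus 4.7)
The plan is to mimic the proof strategy of \Cref{lem:bndineq}, but this time transferring the scaling inequality from the covariance factor to the $P$ factor in the Kronecker product. The key initial observation is that since $P \in \Symmpsd^{n_x}$, its spectral norm equals its largest eigenvalue; combined with $\nrm{P} \leq \epsilon$ this yields $0 \sleq P \sleq \epsilon I$ in the Löwner order.

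Next, I would lift this inequality through the Kronecker product with $\Sigma$. Because the spectrum of $\Sigma \otimes M$ is the pairwise product of the spectra of $\Sigma$ and $M$, left-tensoring with the PSD matrix $\Sigma$ preserves the Löwner order, so that $0 \sleq \Sigma \otimes P \sleq \epsilon (\Sigma \otimes I)$. Then conjugating by the real matrix $\mathbf{A}$, \ie{} pre- and post-multiplying by $\trans{\mathbf{A}}$ and $\mathbf{A}$, again preserves the Löwner order, producing
\begin{equation*}
   0 \sleq \trans{\mathbf{A}}(\Sigma \otimes P)\mathbf{A} \sleq \epsilon\, \trans{\mathbf{A}}(\Sigma \otimes I)\mathbf{A}.
\end{equation*}

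Finally, I would apply the standard monotonicity fact that for symmetric matrices $0 \sleq X \sleq Y$ one has $\nrm{X} \leq \nrm{Y}$ (since for PSD matrices the spectral norm coincides with the largest eigenvalue, and this quantity is monotone in the Löwner order). Applied to the chain above this gives the stated inequality.

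The entire argument is essentially the same three-step pattern used in \Cref{lem:bndineq} (reduce to a Löwner inequality, tensor with a PSD factor, sandwich with $\mathbf{A}$), so I do not anticipate a real obstacle; the only thing to be careful about is invoking spectral-norm monotonicity only after confirming that both sides are PSD, which is automatic here.
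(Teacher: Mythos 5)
Your proposal is correct and follows essentially the same route as the paper, which reduces $\nrm{P}\leq\epsilon$ to $0 \sleq P \sleq \epsilon I$ and then reuses the argument of Lemma~\ref{lem:bndineq} (Kronecker with a PSD factor preserves the Löwner order by the eigenvalue-product fact, conjugation by $\mathbf{A}$ preserves it, and the spectral norm is monotone on PSD matrices). No gaps.
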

\begin{proof}
   Note that $\nrm{P} \leq \epsilon$ implies $P \sleq \epsilon I$. Therefore
   we can prove the required result using the same arguments as for the proof
   of Lemma~\ref{lem:bndineq}.
\end{proof}

Using Lemma~\ref{lem:bnd2} we can see that 
$\kappaA = \nrm{\At(\Sig \otimes I)\Am}$, since $\nrm{\F} \dfn
\max_{\nrm{P} \leq 1} \nrm{\At(\Sig \otimes P)\Am}$. 
Analogously we can find $\kappaB$, $\kappaBc$
and $\kappaL$. The lemma is however not applicable to $\kappaAB$, which is why we
define it as $\kappaAB = \nrm{\Am} \nrm{\Bm}$. The same is true for $\kappaLB$
and $\etaLBc$. The applicability of
Lemma~\ref{lem:bndineq} is less direct and will be used in Section~\ref{sec:ric}
and Section~\ref{sec:reg}.
To evaluate $\etaLyap$ and $\etaLyaph$
we use Lemma~\ref{lem:lyapnorm}, which is similar to a result for deterministic
dynamics \cite{Gahinet1990}:
\begin{lemma} \label{lem:lyapnorm}
   Let $\lyapcl(P)$ be an invertible Lyapunov operator as defined 
   in Table~\ref{tab:constants}. Then,
   \begin{equation}
      \nrm{\lyapcl^{-1}} = \nrm{\lyapcl^{-1}(I)}.
   \end{equation}
\end{lemma}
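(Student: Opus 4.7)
The strategy is to exploit a positivity-based sandwich argument that reduces the operator-norm computation to evaluating $\lyapcl^{-1}$ at a single point. First I would rewrite $\lyapcl(P) = P - \op{T}(P)$ where $\op{T}(P) \dfn \Lt(\Sig \otimes P) \Lm$. The key observation is that $\op{T}$ is positivity preserving: since $\Sig \sgeq 0$ and $P \sgeq 0$ imply $\Sig \otimes P \sgeq 0$ (the Kronecker product of two positive semidefinite matrices is positive semidefinite), one immediately gets $\op{T}(P) \sgeq 0$. By linearity, $\op{T}$ is monotone with respect to the Loewner order on $\Symm^{n_x}$.

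Next I would argue that $\lyapcl^{-1}$ inherits this monotonicity. Under the mean-square-stability setting implicit in the paper, one has $\rho(\op{T}) < 1$, which is precisely what renders $\lyapcl = I - \op{T}$ invertible in the cases of interest. The Neumann series $\lyapcl^{-1} = \sum_{k=0}^\infty \op{T}^k$ then converges, and as a sum of positivity-preserving operators $\lyapcl^{-1}$ is itself positivity preserving. By linearity this translates to monotonicity: $X \sleq Y$ implies $\lyapcl^{-1}(X) \sleq \lyapcl^{-1}(Y)$.

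The conclusion is then immediate. For any $X \in \Symm^{n_x}$ with $\nrm{X} \leq 1$ we have $-I \sleq X \sleq I$, and monotonicity of $\lyapcl^{-1}$ yields
\[
   -\lyapcl^{-1}(I) \sleq \lyapcl^{-1}(X) \sleq \lyapcl^{-1}(I).
\]
Since $\lyapcl^{-1}(I) \sgeq 0$, this envelope implies $\nrm{\lyapcl^{-1}(X)} \leq \nrm{\lyapcl^{-1}(I)}$. Taking the supremum over admissible $X$ gives $\nrm{\lyapcl^{-1}} \leq \nrm{\lyapcl^{-1}(I)}$, and combining with the trivial bound $\nrm{\lyapcl^{-1}(I)} \leq \nrm{\lyapcl^{-1}}$ (using $\nrm{I} = 1$) closes the argument.

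The main obstacle I anticipate is the Neumann-series step: to make this fully rigorous one must invoke mean-square stability of the closed loop, so that $\rho(\op{T}) < 1$. This is not spelled out in the lemma's hypotheses but is consistent with how $\lyapcl$ and $\lyaphsa$ arise from the stabilising controllers in Table~\ref{tab:constants}. A secondary, purely technical point is justifying that $-Z \sleq Y \sleq Z$ with $Z \sgeq 0$ forces $\nrm{Y} \leq \nrm{Z}$ for symmetric $Y$; this follows from the identity $\nrm{Y} = \max\{\lambda_{\max}(Y), -\lambda_{\min}(Y)\}$ applied to each side of the sandwich together with $\lambda_{\max}(Z) = \nrm{Z}$ when $Z \sgeq 0$.
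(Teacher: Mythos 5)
Your proof is correct, but it takes a genuinely different route from the paper's. The paper establishes $\nrm{\lyapcl^{-1}} \leq \nrm{\lyapcl^{-1}(I)}$ by passing through the stochastic-trajectory representation $\trans{x_0}\lyapcl^{-1}(Q)x_0 = \E[\sum_{k=0}^\infty \trans{x_k}Qx_k] = \tr(HQ)$ with $H \sgeq 0$, and then invoking an external result (Recht et al., Prop.~2.1) to conclude that $Q = I$ maximizes $\tr(HQ)$ over $\nrm{Q}\leq 1$; the easy direction via $\nrm{I}=1$ is the same in both arguments. You instead stay entirely inside linear algebra: write $\lyapcl = I - \op{T}$ with $\op{T}$ positivity-preserving, expand $\lyapcl^{-1}$ as a Neumann series to get Loewner-order monotonicity, and sandwich $-I \sleq X \sleq I$. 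Both arguments ultimately rest on the same fact --- positivity of $\lyapcl^{-1}$ --- but yours is more self-contained (no probabilistic representation, no external citation) at the cost of explicitly needing $\rho(\op{T}) < 1$ rather than mere invertibility of $\lyapcl$. You are right to flag this as a gap relative to the lemma's literal hypothesis; note, however, that the paper's own proof has the same implicit dependence (the infinite sum $\E[\sum_k \trans{x_k}Qx_k]$ only converges under mean square stability), and m.s.s.\ of the closed loop under $K^\star$ is guaranteed by Proposition~\ref{prob:lqr}, so in context this is not a real deficiency. Your handling of the two technical points (the sandwich $-Z \sleq Y \sleq Z \Rightarrow \nrm{Y}\leq\nrm{Z}$, and restricting to symmetric $X$ consistent with the domain $\Symm^{n_x}$) is sound.
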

\begin{proof}
   First note that $\nrm{I} = 1$. Therefore $\nrm{\lyapcl^{-1}} \geq
   \nrm{\lyapcl^{-1}(I)}$. To prove $\nrm{\lyapcl^{-1}} \leq
   \nrm{\lyapcl^{-1}(I)}$ note that $\trans{x_0} \lyapcl^{-1}(Q) x_0 = 
   \E[\ssum_{k=0}^\infty \trans{x_k} Q x_k]$, with $x_{k+1} = A(w_k)x_k$
   \cite{Morozan1983}. We can write this as $\tr{(\ssum_{k=0}^\infty 
   \E[x_k \trans{x_k}] Q)} = \tr{HQ}$, where $H = \lyaphsa^{-1}(I) \sgeq 0$ and
   apply \cite[Proposition 2.1]{Recht2010} to show that 
   $I = \argmax_{\nrm{Q} \leq 1} \tr HQ$. Therefore $\nrm{\lyapcl^{-1}}\leq
   \nrm{\lyapcl^{-1}(I)}$. 
\end{proof}
\archiv{
The proofs of the Lemmas in this section are deferred to the technical report
associated with this paper \cite{Coppens2020}. 
}{}

\section{RICCATI PERTURBATION} \label{sec:ric}
In this section we study the stochastic Riccati equation with perturbed
parameters. The goal is to bound how much such perturbations affect the
solutions, thereby proving Theorem~\refta{}. To do so we will use the methodology applied in \cite{Konstantinov1993},
\cite{Sun1998} to the deterministic case. 
The main proof is stated at the end of the section, for which we state
the main component first. This is a reformulation of
the perturbed Riccati equation as a fixed-point equation.

More specifically let $P^\star$ be the solution of $\ric(P^\star, \Sig)=0$ and $\Sigd$ 
selected such that $\Sig + \Sigd \in \Symmpsd^{n_w}$.  
Then a $\Delta P \in \Symm^{n_x}$ 
is a solution of $\ric(P^\star + \Delta P, \Sig + \Sigd) = 0$ \iff{} it is a 
solution to the following fixed-point equation:
\begin{equation} \label{eq:ricfp}
    \ricfp(\Delta P) \dfn \lyapcl^{-1}(\ricz(\Delta P) + \ricd(\Delta P))
    = \Delta P,
\end{equation}
with $\lyapcl$ the Lyapunov operator for the optimal closed-loop system ---
which is invertible since the closed-loop system is \mss{} \cite{Morozan1983} ---
and where
\begin{align*}
    \ricz(\Delta P) &\dfn \ric(P^\star + \Delta P, \Sig) - \lyapcl(\Delta P) \\
    \ricd(\Delta P) &\dfn \ric(P^\star + \Delta P, \Sig + \Sigd) - 
        \ric(P^\star + \Delta P, \Sig).
\end{align*}
Using the constants in Table~\ref{tab:constants}, Lemma~\ref{lem:ricfp} 
then describes two essential properties of $\ricfp$.\archiv{}{{ }The proof is 
deferred to Appendix~\ref{sec:ric:a}.} 
\begin{lemma} \label{lem:ricfp}
    Let $\ricfp$ be defined as in \eqref{eq:ricfp} and $\mathcal{D} \dfn \big\{ 
        \Delta P \in \Symm^{n_x} \,|\, \nrm{\Delta P} \leq \epsP \leq \mu_{P}^\star, \Delta P + P^\star \sgeq 0 \big\}$.
    For every $\Delta P \in \mathcal{D}_P$
    \begin{theoremcontents}
        \item the spectral norm of $\ricfp(\Delta P)$ is bounded as:
        \begin{align} 
            \nrm{\ricfp(\Delta P)} &\leq h(\epsP, \sigm) \nonumber \\ &= 
            \etaLyap^{-1}(\sigm(\etaA + \kappaA 
            \epsP) + \kappaS \epsP^2),  \label{eq:ricfpb}
        \end{align}
        \item the matrix $\ricfp(\Delta P)$ is symmetric.
    \end{theoremcontents}
\end{lemma}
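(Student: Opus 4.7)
The strategy for (i) is the triangle inequality
\begin{equation*}
\nrm{\ricfp(\Delta P)} \leq \nrm{\lyapcl^{-1}}\bigl(\nrm{\ricz(\Delta P)} + \nrm{\ricd(\Delta P)}\bigr) = \tfrac{1}{\etaLyap}\bigl(\nrm{\ricz(\Delta P)} + \nrm{\ricd(\Delta P)}\bigr),
\end{equation*}
followed by the separate bounds $\nrm{\ricz(\Delta P)} \leq \kappaS \epsP^2$ and $\nrm{\ricd(\Delta P)} \leq \sigm(\etaA + \kappaA \epsP)$. Claim (ii) is the shorter argument: both $\ric(P^\star + \Delta P, \Sig + \Sigd)$ and $\lyapcl(\Delta P) = \Delta P - \Lt(\Sig \otimes \Delta P)\Lm$ are symmetric whenever $\Delta P \in \Symm^{n_x}$, and applying the transpose to $\lyapcl(Y) = X$ for any $X \in \Symm^{n_x}$ gives $\lyapcl(Y^T) = \lyapcl(Y)$, so that $Y^T = Y$ follows from injectivity of $\lyapcl$; hence $\lyapcl^{-1}$ preserves symmetry and $\ricfp(\Delta P) \in \Symm^{n_x}$.

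For $\ricz$, I would use that $\lyapcl$ is the Fr\'echet derivative of $\ric(\cdot, \Sig)$ at $P^\star$ and that $\ric(P^\star, \Sig) = 0$, so that $\ricz(\Delta P)$ equals the second-order Taylor remainder of the only nonlinear term $M(P) \dfn \H(P)^T(R + \G(P))^{-1}\H(P)$. A completion-of-squares identity yields
\begin{equation*}
\ricz(\Delta P) = (K^\star - K_\Delta)^T(R + \G(P^\star + \Delta P))(K^\star - K_\Delta),
\end{equation*}
with $K_\Delta \dfn -(R + \G(P^\star + \Delta P))^{-1}\H(P^\star + \Delta P)$. Subtracting the defining equations for $K^\star$ and $K_\Delta$ produces the closed form $K^\star - K_\Delta = (R + \G(P^\star + \Delta P))^{-1}\Bt(\Sig \otimes \Delta P)\Lm$. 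Substituting it back and using $(R + \G(P^\star + \Delta P))^{-1} \sleq R^{-1}$, which holds on $\mathcal{D}$ because $\G(P^\star + \Delta P) \sgeq 0$, gives $\ricz(\Delta P) \sleq \Lt(\Sig \otimes \Delta P)\Sm(\Sig \otimes \Delta P)\Lm$. Sub-multiplicativity of the spectral norm then closes this part via $\nrm{\ricz(\Delta P)} \leq \nrm{\Lm}^2 \nrm{\Sm}\nrm{\Sig}^2 \nrm{\Delta P}^2 = \kappaS \epsP^2$.

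For $\ricd$, the change $\Sig \to \Sig + \Sigd$ only enters through $\F, \G, \H$, so I would decompose $\ricd(\Delta P) = -\F_\Sigd(P^\star + \Delta P) + [M_{\Sig+\Sigd}(P^\star + \Delta P) - M_\Sig(P^\star + \Delta P)]$, with the subscript denoting the operator obtained after replacing $\Sig$ by $\Sigd$ or $\Sig + \Sigd$. Lemma~\ref{lem:bndineq} immediately yields $\nrm{\F_\Sigd(P^\star + \Delta P)} \leq \sigm(\etaA + \kappaA \epsP)$, matching the first term of \eqref{eq:ricfpb}. The main obstacle is the $M$-difference, since the matrix inverse $(R + \G(P))^{-1}$ blocks a direct application of Lemma~\ref{lem:bndineq}. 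My plan is to exploit the variational form $M(P) = \max_K[-K^T(R + \G(P))K - K^T\H(P) - \H(P)^T K]$: inserting the $\Sig$-optimizer $K_{\Sig, P}$ into the $\Sig + \Sigd$ problem yields a lower bound on the difference, and inserting the $\Sig + \Sigd$-optimizer into the $\Sig$ problem yields a matching upper bound. After these substitutions the $M$-difference is sandwiched between two expressions of the shape $\F_\Sigd(P) - (\Am + \Bm K_\bullet)^T(\Sigd \otimes P)(\Am + \Bm K_\bullet)$; the $\F_\Sigd$ contributions cancel against the $-\F_\Sigd(P)$ in $\ricd$, and the residual closed-loop $\Sigd$-terms are controlled by $\sigm$ times an open-loop $\F$-norm via Lemma~\ref{lem:bndineq} together with the structure of the optimal gains on $\mathcal{D}$, so that they are absorbed into the $\sigm(\etaA + \kappaA \epsP)$ estimate. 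Combining the two bounds and dividing by $\etaLyap$ yields \eqref{eq:ricfpb}.
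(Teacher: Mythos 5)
Your overall architecture --- the triangle inequality through $\lyapcl^{-1}$, the two separate bounds $\nrm{\ricz(\Delta P)} \leq \kappaS \epsP^2$ and $\nrm{\ricd(\Delta P)} \leq \sigm(\etaA + \kappaA\epsP)$, and symmetry of the two pieces --- matches the paper, but your route to the bounds is genuinely different. The paper rewrites $\ric$ via the matrix inversion lemma as $P - Q - \At(\oPe)(I+\Sm(\oPe))^{-1}\Am$, derives explicit resolvent-type expressions for $\ricz$ and $\ricd$ through lengthy algebra, and controls them with a bespoke lemma on $(I+MN)^{-1}M \sleq M$ (Lemma~\ref{lem:ineq}, generalising \cite[Lemma 7]{Mania2019}). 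You instead exploit the variational form $\ric(P,\Sig) = P - Q - \min_K[\trans{(\Am+\Bm K)}(\Sig\otimes P)(\Am + \Bm K) + \trans{K}RK]$ and completion of squares. Your $\ricz$ derivation is complete and correct: the identity $\ricz(\Delta P) = \trans{(K^\star - K_\Delta)}(R+\G(P^\star+\Delta P))(K^\star - K_\Delta)$ and the gain formula $K^\star - K_\Delta = (R+\G(P^\star+\Delta P))^{-1}\Bt(\Sig\otimes\Delta P)\Lm$ both check out, and $(R+\G(P^\star+\Delta P))^{-1} \sleq R^{-1}$ (valid on $\mathcal{D}_P$ since $P^\star+\Delta P \sgeq 0$) delivers exactly $\kappaS\epsP^2$ without Lemma~\ref{lem:ineq}. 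Your symmetry argument via injectivity of $\lyapcl$, which commutes with transposition, is also fine and shorter than the paper's Neumann-series argument.

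The gap is in the $\ricd$ step, exactly where you flag the ``main obstacle.'' Write $g_{\Sig'}(K) \dfn \trans{(\Am+\Bm K)}(\Sig'\otimes\hat{P})(\Am+\Bm K) + \trans{K}RK$ with $\hat{P} = P^\star + \Delta P$, so that $\ricd(\Delta P) = \min_K g_{\Sig}(K) - \min_K g_{\Sig+\Sigd}(K)$ in the Loewner order. Your cross-substitution of optimisers gives
\begin{align*}
 -\trans{(\Am + \Bm K_1)}(\Sigd\otimes\hat{P})(\Am+\Bm K_1) \sleq \ricd(\Delta P) \sleq -\trans{(\Am + \Bm K_2)}(\Sigd\otimes\hat{P})(\Am+\Bm K_2),
\end{align*}
with $K_1$, $K_2$ the optimal gains for $\Sig$ and $\Sig+\Sigd$ respectively. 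The lower bound closes cleanly, since $\trans{(\Am+\Bm K_1)}(\Sig\otimes\hat{P})(\Am+\Bm K_1) \sleq g_{\Sig}(K_1) \sleq g_{\Sig}(0) = \F(\hat{P})$ by optimality of $K_1$ for the \emph{unperturbed} covariance. But the upper bound involves the closed loop under $K_2$, which is optimal only for $\Sig+\Sigd$; the same argument compares it to $\At((\Sig+\Sigd)\otimes\hat{P})\Am$, and converting back to $\F(\hat{P})$ costs a factor $(1+\ubsig)/(1+\lbsig)$. As written you therefore prove $\nrm{\ricd(\Delta P)} \leq \sigm\tfrac{1+\ubsig}{1+\lbsig}(\etaA + \kappaA\epsP)$, not the stated constant, and the quadratic $h(\epsP,\sigm)=\epsP$ downstream would change. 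The fix stays inside your framework: replace cross-substitution on the upper-bound side by the pointwise scaling $g_{\Sig+\Sigd}(K) \sgeq (1+\lbsig)\,g_{\Sig}(K)$ for every $K$ (using $\Sigd \sgeq \lbsig\Sig$ and $1+\lbsig \in [0,1]$), which yields $\min_K g_{\Sig+\Sigd} \sgeq (1+\lbsig)\min_K g_{\Sig}$ and hence $\ricd(\Delta P) \sleq -\lbsig \min_K g_{\Sig} \sleq \sigm\, g_{\Sig}(0) = \sigm\F(\hat{P})$; combined with your lower bound this gives $\nrm{\ricd(\Delta P)} \leq \sigm\nrm{\F(\hat{P})} \leq \sigm(\etaA + \kappaA\epsP)$ exactly.
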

\archiv{
The proof of Lemma~\ref{lem:ricfp} is deferred to the
technical report associated with this paper \cite{Coppens2020}. It consist
of a set of lengthy algebraic derivations based on the matrix inversion lemma
and other standard techniques.
}{}

\begin{proof}[Proof of Theorem~\refta{}]
We can now complete proof of Theorem~\refta{}. To do so first note that 
$h(\epsP, \sigm) = \epsP$ is a quadratic equation. It is easy to check that 
\eqref{eq:ricpcond} is a necessary and sufficient condition for the existence 
of a positive solution, which is given by \eqref{eq:ricp}. By assumption we then 
also have that $\epsP \leq \mu_{P}^\star \dfn \min \sigma(P^\star)$.

Under these conditions we can verify three properties of the mapping $\Phi$:
\begin{inlinelist}
    \item it preserves symmetry,
    \item $\nrm{\Delta P} \leq \epsP$ implies $\nrm{\Phi(\Delta P)} \leq
    \epsP$,
    \item $P^\star + \Delta P \sgeq 0$ implies $P^\star + \Phi(\Delta P) \sgeq 0$. 
\end{inlinelist}
Property (i) directly follows from Lemma~\ref{lem:ricfp}.
From \eqref{eq:ricfpb} we also know that for every 
$\Delta P \in \mathcal{D}_P$ we have $\nrm{\Phi(\Delta P)} \leq 
h(\epsP, \sigm) = \epsP$, which implies property (ii).  
Since $\nrm{\ricfp(\Delta P)} \leq \epsP \leq \mu_{P}^\star$
property (iii) holds as well. Therefore $\ricfp(\mathcal{D}_P)
\subseteq \mathcal{D}_P$ and we can apply the \lschauderf{} 
\cite[Corollary 17.56]{Border2006},
which proves $\Delta P \in \mathcal{D}_P$ and therefore Theorem~\refta{}. 
\end{proof}

\section{CONTROLLER PERTURBATION} \label{sec:ctr}
In this section we derive a bound on
$\nrm{K^\star - \hat{K}}$, thereby proving Theorem~\reftb{}.
We state the proof at the end of the section, but first introduce some
of the components.

We will use a result from convex optimization, \cite[Lemma 1]{Mania2019},
which we can apply since both the {nominal} as well as the
{empirical} controllers are optima of the following cost functions:
\begin{align}
    f^*(u) &= \trans{u}Ru + 
(\Am x + \Bm u)^\top (\Sigma \otimes P^\star) (\Am x + \Bm u) \label{eq:cost} \\
\hat{f}(u) &= \trans{u}Ru + 
(\Am x + \Bm u)^\top (\Sigma \otimes \hat{P}) (\Am x + \Bm u). 
\label{eq:cost_est}
\end{align}
Both functions are strongly convex with $\mu_R = \min \sigma(R)$. 
We will then need a bound for
$\nrm{\nabla f^\star(u) - \nabla \hat{f}(u)}$:
\begin{lemma} \label{lem:gradient_bound}
Let $f^\star(u)$ and $\hat{f}(u)$ defined respectively as in \eqref{eq:cost}
and \eqref{eq:cost_est}. Then the difference between their gradients is bounded
as:
\begin{align} 
    \nrm{\nabla f^\star(u) &- \nabla \hat{f}(u)} \leq 
    ((1+\sigm)\epsP\kappaB + \sigm\etaB)\nrm{u} \nonumber \\ &+
    ((1+\sigm)\epsP\kappaAB + \sigm\etaAB)\nrm{x}.\label{eq:gradient_bound}
\end{align}
\end{lemma}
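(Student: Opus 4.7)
The plan is to compute the two gradients in closed form, cancel the common linear part, and then decompose the remaining tensor-structured perturbation into pieces that can be attacked individually. Direct differentiation yields $\nabla f^\star(u) - \nabla \hat f(u) = 2\Bm^\top\bigl[(\Sig \otimes P^\star) - (\Sigh \otimes \hat{P})\bigr](\Am x + \Bm u)$ since the $Ru$ contributions cancel, so the proof reduces to bounding this single matrix-vector expression (the overall factor of two being absorbed in the constants of the final bound).

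Setting $\Sigd = \Sigh - \Sig$ and $\Delta P = \hat{P} - P^\star$, bilinearity of the Kronecker product gives
\begin{equation*}
    (\Sigh \otimes \hat{P}) - (\Sig \otimes P^\star) = (\Sig \otimes \Delta P) + (\Sigd \otimes P^\star) + (\Sigd \otimes \Delta P).
\end{equation*}
The triangle inequality splits $\nrm{\nabla f^\star(u) - \nabla \hat f(u)}$ into three pieces, and further splitting $\Am x + \Bm u$ into its two summands yields six cross-terms, three of the form $\Bm^\top(\cdot)\Bm u$ and three of the form $\Bm^\top(\cdot)\Am x$. Each matches an operator evaluation already prepared in Section~\ref{sec:pre}: cross-terms containing $\Delta P$ but not $\Sigd$ are controlled by Lemma~\ref{lem:bnd2}, producing a factor $\epsP$ multiplying $\kappaB$ or $\kappaAB$ (with $\epsP$ supplied by Theorem~\refta{}); cross-terms containing $\Sigd$ but not $\Delta P$ are controlled by Lemma~\ref{lem:bndineq}, producing a factor $\sigm$ multiplying $\etaB$ or $\etaAB$. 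Collecting coefficients of $\nrm{u}$ gives $(\epsP + \sigm\epsP)\kappaB + \sigm\etaB = (1+\sigm)\epsP\kappaB + \sigm\etaB$, and symmetrically for $\nrm{x}$, matching \eqref{eq:gradient_bound}.

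The main technical subtlety is the mixed term $\Bm^\top(\Sigd \otimes \Delta P)(\Am x + \Bm u)$, since $\Delta P$ need not be positive semidefinite while Lemma~\ref{lem:bnd2} as stated requires a PSD argument. I would handle this either by first using $-\epsP I \sleq \Delta P \sleq \epsP I$ (from $\nrm{\Delta P} \leq \epsP$) to reduce to an application of Lemma~\ref{lem:bndineq} against $\Sigd \otimes I$, or equivalently by invoking the Kronecker norm identity $\nrm{\Sigd \otimes \Delta P} = \nrm{\Sigd}\nrm{\Delta P} \leq \sigm\epsP\nrm{\Sig}$ together with standard submultiplicativity through $\Bm$. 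The rest is routine algebraic bookkeeping to align the six coefficients with the precise form of \eqref{eq:gradient_bound}.
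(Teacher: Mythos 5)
Your overall route --- compute both gradients, cancel the $Ru$ contributions, expand $\Sigh\otimes\hat P-\Sig\otimes P^\star$ by bilinearity of the Kronecker product, and bound the pieces with the operator lemmas of Section~\ref{sec:pre} --- is the same as the paper's. There is, however, one step that does not go through as stated: you propose to control the cross-terms of the form $\Bt(\,\cdot\,)\Am x$ using Lemma~\ref{lem:bnd2} (and Lemma~\ref{lem:bndineq}), but both lemmas are stated and proved only for symmetric quadratic forms $\trans{\mathbf{A}}(\Sigma\otimes P)\mathbf{A}$ with the \emph{same} matrix on both sides; their proofs rest on a semidefinite ordering, which is meaningless for the rectangular matrix $\Bt(\Sigma\otimes P)\Am$. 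The paper explicitly flags this non-applicability and instead uses plain submultiplicativity of the spectral norm, $\nrm{\Bt(\Sigma\otimes P)\Am}\leq\nrm{\Bm}\nrm{\Am}\nrm{\Sigma}\nrm{P}$, which is precisely why $\kappaAB$ and $\etaAB$ appear in Table~\ref{tab:constants} as products of norms rather than as operator norms of $\H$. Since you already invoke exactly this submultiplicativity argument for the mixed term, the repair is immediate, but the claim that Lemma~\ref{lem:bnd2} produces the factor $\epsP\kappaAB$ is, as written, a misapplication.

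Two smaller points of comparison. The paper groups the perturbation as $\Sig\otimes\Delta P+\Sigd\otimes(P^\star+\Delta P)$ rather than your three-term split; since $P^\star+\Delta P=\hat P\sgeq 0$, Lemma~\ref{lem:bndineq} then applies directly to the second summand, and $\sigm\nrm{\G(\hat P)}\leq\sigm(\etaB+\kappaB\epsP)$ recovers the same coefficients without the detour through $-\epsP I\sleq\Delta P\sleq\epsP I$ for the mixed term (your detour is nonetheless valid). Finally, the ``factor of two absorbed in the constants'' is not quite how the paper proceeds: it works consistently with the half-gradient and the corresponding half strong-convexity modulus $\mu_R$, so the factor of two cancels in the application of \cite[Lemma 1]{Mania2019} in the proof of Theorem~\reftb{}; the constants $\kappaB$, $\etaB$, $\kappaAB$, $\etaAB$ carry no such factor, so literally keeping the $2$ would double the right-hand side of \eqref{eq:gradient_bound}.
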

\begin{proof}
The gradients are given by $\nabla f^\star(u) = (\Bt(\Sig \otimes P^\star)\Bm + R)u
+ \Bt(\Sig \otimes P^\star) \Am x$ and $\nabla \hat{f}(u) = (\Bt(\Sigh \otimes 
\hat{P})\Bm + R)u + \Bt(\Sigh \otimes \hat{P})\Am x$. The difference between
the first terms of the gradients can be bounded by using Lemma~\ref{lem:bndineq}
and Lemma~\ref{lem:bnd2}. More specifically we have
\begin{align*}
    & (\Bt(\Sigh \otimes \hat{P})\Bm + R)u - (\Bt(\Sig \otimes P^\star)\Bm + R)u
    \\ &  
    = \Bt(\Sig \otimes \Delta P + \Sigd \otimes (P^\star + \Delta P))\Bm u.
\end{align*}
We can remove the dependency on $\Sigd$ and $\Delta P$ by applying
Lemma~\ref{lem:bndineq} and Lemma~\ref{lem:bnd2} respectively, resulting in:
\begin{align*}
    & \Bt(\Sig \otimes \Delta P + \Sigd \otimes (P^\star + \Delta P))\Bm 
    \\ &  
    \sleq \Bt((1+\sigm)\epsP(\Sig \otimes I) + \sigm(\Sig \otimes P^\star))\Bm.
\end{align*}
Since Lemma~~\ref{lem:bnd2} and Lemma~\ref{lem:bndineq} are not applicable for
the difference between the second term of the gradients we instead produce the
following bound:
\begin{align*}
    &\nrm{\Bt(\Sigh \otimes \hat{P})\Am x - \Bt(\Sig \otimes P^\star)\Am x}
    \\ &
    = \nrm{\Bt((\Sig \otimes \Delta P) + \Sigd \otimes (P + \Delta P))\Am}
    \nrm{x} \\
    &\leq \nrm{\Bm} \nrm{\Am} \nrm{\Sig}\nrm{\Delta P} \\ 
    &\qquad +\nrm{\Bm} \nrm{\Am} 
    \nrm{\Sigd} (\nrm{P^\star} + \nrm{\Delta P})) \nrm{x} \\
    &=  ((1+\sigm)\epsP \nrm{\Bm} \nrm{\Am} \nrm{\Sig} \\
    &\qquad + \sigm
    \nrm{\Bm} \nrm{\Am} \nrm{\Sig} \nrm{P^\star})\nrm{x},
\end{align*}
where we used \eqref{eq:confsig} for $\nrm{\Delta \Sig} \leq \sigm \nrm{\Sig}$.
Using the definitions of $\kappaB$, $\kappaAB$, $\etaB$ and $\etaAB$ we 
get \eqref{eq:gradient_bound}.
\end{proof}

We are now ready to bound $\nrm{\hat{K} - K^\star}$. 
\begin{proof}[Proof of Theorem~\reftb{}]
    \allowdisplaybreaks
    Let $x$ be any vector with $\nrm{x} = 1$. Then we have
    \begin{align}
        \nrm{(\hat{K} &- K^\star)x} \leq \nrm{\hat{u} - u^\star} \nonumber \\
        &\leq
        {\mu_R}^{-1} ((1+\sigm)\epsP\kappaB + \sigm\etaB)\nrm{K^\star x} \nonumber \\
        & \quad + {\mu_R}^{-1} ((1+\sigm)\epsP\kappaAB + \sigm\etaAB)\nrm{x}, \label{eq:ineqkdiff}
    \end{align}
    where we used $\hat{K} x = \hat{u}$ and $K^\star x = u^\star$ for the first
    inequality and we combined \cite[Lemma 1]{Mania2019} and Lemma~\ref{lem:gradient_bound}
    for the second inequality. Let $x^\star = \argmax_{\nrm{x} = 1} \nrm{(\hat{K} - K^\star)x}$,
    for which $\nrm{(\hat{K} - K^\star)x^\star} = \nrm{\hat{K} - K^\star}$.
    Then \eqref{eq:ineqkdiff} also holds for $x^\star$. If we also use
    $\nrm{K^\star x^\star} \leq \nrm{K^\star} \nrm{x^\star} = \nrm{K^\star}$,
    then we have proven Theorem~\reftb{}. 
\end{proof}

\section{SUBOPTIMALITY} \label{sec:reg}

This section is dedicated to the proof of the main result of this paper. 
More specifically we derive a bound for the suboptimality of the
empirical controller compared to the nominal one, 
given in \eqref{eq:regret} as a part of Theorem~\reftc{}. The proof of which
is stated at the end of this section. 

We first introduce the main component,
which is a perturbation bound
on a matrix operator, similar to the one derived in Section~\ref{sec:ric}.
More specifically we will study the adjoint Lyapunov operator
$\lyaph: \Re^{n_x \times n_x} \times \Re^{n_u \times n_x} \rightarrow 
\Re^{n_x \times n_x}$
for the closed-loop system $x_{k+1} = (A(w) + B(w)\hat{K})x_k$ given by
$\lyaph(X, K) \dfn X - 
\trans{{(\Am + \Bm K)^\#}}(\Sig \otimes X)(\Am + \Bm K)^\#$.
    In the remainder of this 
    section we will omit the second argument of $\lyaph$ and use a subscript $\star$, when $K^\star$ is implied
    (\ie{} $\lyaph(X, K^\star) = \lyaphsa(X)$).
    This corresponds with the definition in Table~\ref{tab:constants}. 

We can then state the following lemma
\begin{lemma} \label{lem:lyappb}
    Let $\covarinf$ and $\covarinfh = \covarinfh + \covarinfd$ denote the 
    solution
    to $\lyaphsa(\covarinf) = \xinit \trans{\xinit}$ and 
    $\lyaph(\covarinfh, \hat{K}) = \xinit \trans{\xinit}$
    respectively.
    Then $\covarinfd$ is also the solution of the following fixed-point equation:
    \begin{align}
        \covarinfd &= \lyapfp(\covarinfd) \dfn {\lyaphsa}^{-1}\big( \label{eq:lyapfp} \\
            & \trans{{{\Lm}^\#}} (\Sig \otimes (\covarinf + \covarinfd))
            (\Bm \Delta K)^\# 
            \nonumber \\
            & + 
            \trans{{(\Bm \Delta K)^\#}} 
            (\Sig \otimes (\covarinf + \covarinfd)) {\Lm}^\# 
            \nonumber \\
            & +
            \trans{{(\Bm \Delta K)^\#}} 
            (\Sig \otimes (\covarinf + \covarinfd)) 
            (\Bm \Delta K)^\# \nonumber
        \big),
    \end{align}
    with $\Delta K = \hat{K} - K^\star$. 
    The following bounds then hold
    \begin{align}
        &\nrm{\lyapfp(\covarinfd)} \leq g(\nrm{\covarinfd}) \nonumber \\
        & \dfn \etaLyaph^{-1} 
                (2 \kappaLB \epsK + \kappaBc \epsK^2) 
                (\nrm{\covarinf} + \nrm{\covarinfd}), 
                \label{eq:lyapfpa}\\
        &\nrm{\lyapfp(\covarinfd) - \lyapfp(\covarinfd')} \nonumber \\
        & \leq \etaLyaph^{-1} 
        (2 \kappaLB \epsK + \kappaBc \epsK^2) 
        \nrm{\covarinfd - \covarinfd'},\label{eq:lyapfpbb}
    \end{align}
    with $\kappaLB$, $\kappaBc$ and $\etaLyaph$ defined as in 
    Table~\ref{tab:constants}.
\end{lemma}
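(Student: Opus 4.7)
The plan is to derive \eqref{eq:lyapfp} by subtracting the two Lyapunov equations and then to bound each of the three resulting summands. Writing $\hat{K} = K^\star + \Delta K$, linearity of the block-transpose yields $(\Am + \Bm \hat K)^\# = \Lm^\# + (\Bm \Delta K)^\#$. Substituting into $\lyaph(\covarinfh, \hat K) = \xinit\trans{\xinit}$ and expanding the quadratic form $\trans{[\Lm^\# + (\Bm\Delta K)^\#]}(\Sig\otimes\covarinfh)[\Lm^\# + (\Bm\Delta K)^\#]$ produces four contributions. Subtracting $\lyaphsa(\covarinf) = \xinit\trans{\xinit}$ cancels the $\trans{{\Lm^\#}}(\Sig\otimes\covarinf)\Lm^\#$ term, leaving $\lyaphsa(\covarinfd)$ on the left and the three bilinear/quadratic terms in $\Sig \otimes (\covarinf + \covarinfd)$ on the right. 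Applying $\lyaphsa^{-1}$, which exists since the nominal closed loop is \mss{}, produces \eqref{eq:lyapfp}.

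For \eqref{eq:lyapfpa}, I would combine $\nrm{\lyaphsa^{-1}} = 1/\etaLyaph$ (from the definition in Table~\ref{tab:constants}, justified analogously to Lemma~\ref{lem:lyapnorm}) with the triangle inequality on the three summands. The key algebraic identity is $(\Bm\Delta K)^\# = (I_{n_w+1} \otimes \trans{\Delta K})\Bm^\#$, which follows directly from the block structure of $\Bm$. By the mixed-product rule $(A\otimes B)(C\otimes D) = AC \otimes BD$, the quadratic term rewrites as $\trans{{\Bm^\#}}(\Sig \otimes \Delta K \, Y \, \trans{\Delta K})\Bm^\#$ with $Y = \covarinf + \covarinfd$, which is exactly $\Gc(\Delta K Y \trans{\Delta K})$ and so is bounded by $\kappaBc \, \nrm{\Delta K}^2 \, \nrm{Y} \leq \kappaBc \epsK^2 \nrm{Y}$. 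The two cross terms are mutual transposes, so each is bounded by $\kappaLB \epsK \nrm{Y}$ via submultiplicativity together with $\nrm{\Sig \otimes Y} = \nrm{\Sig}\nrm{Y}$ and $\nrm{\Delta K} \leq \epsK$. Collecting these gives \eqref{eq:lyapfpa}.

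For \eqref{eq:lyapfpbb}, I would exploit that $\lyapfp$ is affine in $\covarinfd$: the right-hand side of \eqref{eq:lyapfp} depends on $\covarinfd$ only through the factor $\Sig \otimes (\covarinf + \covarinfd)$. Hence $\lyapfp(\covarinfd) - \lyapfp(\covarinfd')$ is exactly the same expression with $\covarinf + \covarinfd$ replaced by $\covarinfd - \covarinfd'$, and the identical norm estimate applies verbatim. The main obstacle is correctly handling the interaction between the block-transpose $\#$ and the Kronecker products; once the factorization $(\Bm\Delta K)^\# = (I_{n_w+1}\otimes \trans{\Delta K})\Bm^\#$ is in place, the cross and quadratic terms reduce to either a direct $\Gc$ evaluation or a routine submultiplicative estimate, and the rest is just bookkeeping of the constants in Table~\ref{tab:constants}.
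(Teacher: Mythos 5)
Your proposal is correct and follows essentially the same route as the paper: expand the perturbed closed-loop map as $\Lm^\# + (\Bm\Delta K)^\#$, subtract the nominal Lyapunov equation and invert $\lyaphsa$ to get \eqref{eq:lyapfp}, then bound the quadratic term by pulling $\Delta K$ inside the Kronecker product (exactly the paper's identity $\trans{{(\Bm \Delta K)^\#}} (\Sig \otimes \covarinf) (\Bm \Delta K)^\# = \trans{{\Bm^\#}} (\Sig \otimes \Delta K \covarinf \trans{\Delta K}) \Bm^\#$) and the cross terms by submultiplicativity, with the Lipschitz bound \eqref{eq:lyapfpbb} following from affinity of the map in $\covarinfd$. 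The only cosmetic discrepancy is that your cross-term constant is written as $\kappaLB$ (matching the lemma statement) whereas the paper's own proof and the downstream Lemma~\ref{lem:lyapfpb} use $\etaLBc$, which is the constant that actually arises from the norms $\nrm{\Lm^\#}\nrm{\Bm^\#}\nrm{\Sig}$ appearing in your estimate; this inconsistency originates in the paper, not in your argument.
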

\begin{proof}
We can use basic algebra and $\lyaphsa(\covarinf) = \xinit \trans{\xinit}$ 
to rewrite the perturbed Lyapunov equation 
$\lyaph(\covarinf + \covarinfd, K^\star + \Delta K) = \xinit \trans{\xinit}$
as in \eqref{eq:lyapfp}. 
The bounds are then derived by noting that 
$\nrm{\trans{{(\Bm \Delta K)^\#}} (\Sig \otimes \covarinf) (\Bm \Delta K)^\#}=
\nrm{\trans{{\Bm^\#}} (\Sig \otimes \Delta K \covarinf \trans{\Delta K}) \Bm^\#}
\leq \kappaBc \nrm{\Delta K}^2 \nrm{\covarinf}$. We also know 
$\nrm{\trans{{(\Bm \Delta K)^\#}} (\Sig \otimes \covarinf) {\Lm}^\#} \leq
\nrm{\Lm^\#} \nrm{\Bm^\#}  \nrm{\Sig}\nrm{\Delta K} \nrm{\covarinf} = 
\etaLBc \nrm{\Delta K} \nrm{\covarinf}$. Using the same tricks for the final
term in \eqref{eq:lyapfp} and the definition of $\etaLyaph$ allows us
to prove \eqref{eq:lyapfpa}. The same tricks also produce \eqref{eq:lyapfpbb}
where we use $A \otimes B + A \otimes C = A \otimes (B+C)$. 
\end{proof}

Similarly to how Lemma~\ref{lem:ricfp} was used to bound the Riccati perturbation,
we can bound $\nrm{\covarinfd}$.
\begin{lemma} \label{lem:lyapfpb}
    Suppose $\etaLyaph^{-1} 
    (2 \kappaAB \nrm{\Delta K} + \kappaBc \nrm{\Delta K}^2) < 1$ then
    we can bound $\nrm{\covarinfd}$ as
    \begin{equation} \label{eq:lyapfpb}
        \nrm{\covarinfd} \leq 
        \frac{2 \etaLBc \nrm{\Delta K} 
        + \kappaBc \nrm{\Delta K}^2}
        {\etaLyaph - 2 \etaLBc \nrm{\Delta K} 
        - \kappaBc \nrm{\Delta K}^2}\nrm{\covarinf} 
    \end{equation}
    and $\hat{K}$ renders the true system \mss{}.
\end{lemma}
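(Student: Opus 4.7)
The plan is to mirror the fixed-point argument used for Theorem~\refta{}, but with Banach rather than Brouwer, since Lemma~\ref{lem:lyappb} already provides both an invariance-type estimate \eqref{eq:lyapfpa} and a Lipschitz estimate \eqref{eq:lyapfpbb} that together make $\lyapfp$ a strict contraction on a suitable ball in $\Symm^{n_x}$. Let $\alpha \dfn \etaLyaph^{-1}(2 \etaLBc \epsK + \kappaBc \epsK^2)$, which by hypothesis satisfies $\alpha < 1$.

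First I would identify the correct radius. On any ball $\mathcal{B}_r \dfn \{X \in \Symm^{n_x} : \nrm{X} \leq r\}$, \eqref{eq:lyapfpa} gives $\nrm{\lyapfp(X)} \leq \alpha(\nrm{\covarinf} + r)$. To make $\mathcal{B}_r$ invariant under $\lyapfp$ I require $\alpha(\nrm{\covarinf} + r) \leq r$, and solving for the smallest such $r$ gives $r = \alpha \nrm{\covarinf}/(1-\alpha)$. A direct algebraic simplification shows this value coincides exactly with the right-hand side of \eqref{eq:lyapfpb}. On the same $\mathcal{B}_r$, \eqref{eq:lyapfpbb} gives $\nrm{\lyapfp(X) - \lyapfp(X')} \leq \alpha \nrm{X - X'}$, so $\lyapfp$ is an $\alpha$-contraction. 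By the \lbanachf{} applied on $\mathcal{B}_r$ (which is a closed subset of the Banach space $\Symm^{n_x}$ equipped with the spectral norm), the fixed point $\covarinfd$ lies in $\mathcal{B}_r$, which establishes \eqref{eq:lyapfpb}.

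For the mean square stability claim I would argue via invertibility of the perturbed adjoint Lyapunov operator. Write $\lyaph(\cdot, \hat{K}) = \lyaphsa - \op{M}$ with $\op{M}(X) \dfn \trans{(\Am + \Bm \hat{K})^\#}(\Sig \otimes X)(\Am + \Bm \hat{K})^\# - \trans{\Lm^\#}(\Sig \otimes X)\Lm^\#$. Expanding $(\Am + \Bm \hat{K})^\# = \Lm^\# + (\Bm \Delta K)^\#$ and bounding each resulting cross-term using the same tricks as in the proof of Lemma~\ref{lem:lyappb} yields $\nrm{\lyaphsa^{-1} \circ \op{M}} \leq \alpha < 1$. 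A Neumann series argument then shows that $\lyaph(\cdot, \hat{K}) = \lyaphsa \circ (\op{I} - \lyaphsa^{-1} \circ \op{M})$ is invertible, equivalently, the spectral radius of the closed-loop operator $X \mapsto \trans{(\Am + \Bm \hat{K})^\#}(\Sig \otimes X)(\Am + \Bm \hat{K})^\#$ is strictly less than one. By \cite{Morozan1983} this is equivalent to the true system being \mss{} under $\hat{K}$.

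The main obstacle I anticipate is the \mss{} part: extracting an operator-norm bound on $\lyaphsa^{-1} \circ \op{M}$ from the scalar contraction factor $\alpha$ requires being careful that the same constants that bound the three terms of $\lyapfp$ in \eqref{eq:lyapfp} also bound the analogous terms in $\op{M}$, and that $\nrm{\lyaphsa^{-1}} = \etaLyaph^{-1}$ is used through Lemma~\ref{lem:lyapnorm}. Once these identifications are in place, the contraction hypothesis delivers both the fixed-point bound and stability in a single stroke.
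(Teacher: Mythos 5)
Your treatment of the bound \eqref{eq:lyapfpb} is correct and is essentially the paper's own argument: the paper likewise solves $\epsX = g(\epsX)$ to get the radius (the right-hand side of \eqref{eq:lyapfpb}), and invokes the \lbanachf{} on the set $\mathcal{D}_X = \{\covarinfd \in \Symm^{n_x} \mid \nrm{\covarinfd}\leq\epsX,\ \covarinf+\covarinfd\sgeq 0\}$ using the two estimates of Lemma~\ref{lem:lyappb}. The only difference is that you drop the constraint $\covarinf+\covarinfd\sgeq 0$ from the invariant set; this is harmless for the norm bound, but it is precisely the ingredient the paper uses for the stability claim.

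The \mss{} part of your proposal contains a genuine gap. You claim that invertibility of $\lyaph(\cdot,\hat K) = \lyaphsa\circ(\op{I}-\lyaphsa^{-1}\circ\op{M})$, obtained from the Neumann series, is \emph{equivalent} to the spectral radius of the closed-loop operator $\op{T}_{\hat K}: X\mapsto \trans{{(\Am+\Bm\hat K)^\#}}(\Sig\otimes X)(\Am+\Bm\hat K)^\#$ being strictly less than one. It is not: invertibility of $\op{I}-\op{T}_{\hat K}$ only says that $1$ is not an eigenvalue of $\op{T}_{\hat K}$, and an operator can have spectral radius far above one while $\op{I}-\op{T}_{\hat K}$ stays invertible. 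Repairing this step requires an additional ingredient, e.g.\ a homotopy argument (along $K_t = K^\star + t\Delta K$ the same bound keeps $\op{I}-\op{T}_{K_t}$ invertible for every $t\in[0,1]$; since $\op{T}_{K_t}$ is a positive operator its spectral radius lies in its spectrum, so $\rho(\op{T}_{K_t})$ cannot cross $1$ and must stay below it), or a Perron--Frobenius-type characterisation of \mss{} for positive operators. The paper sidesteps this entirely: because the fixed point is produced inside $\mathcal{D}_X$, the solution $\covarinfh = \covarinf+\covarinfd \sgeq 0$ of the perturbed Lyapunov equation is a finite positive semidefinite matrix, and since $\covarinfh = \ssum_{k=0}^\infty \E[x_k\trans{x_k}]$ for the closed loop under $\hat K$, its finiteness directly yields \mss{} \cite{Morozan1983}. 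This is why the positive-semidefiniteness constraint belongs in the invariant set, and the cleanest fix for your write-up is to restore it and conclude stability the paper's way rather than through operator invertibility.
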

\begin{proof}
    It is easy to verify that the solution to $\epsX = g(\epsX)$ 
    is given by the right-hand side of \eqref{eq:lyapfpb}, with $g$ as
    defined in Lemma~\ref{lem:lyappb}.
    Let
    $\mathcal{D}_X = \big \{ \covarinfd \in \Symm^{n_x} \mid \nrm{\covarinfd} \leq \epsX,
    \covarinf + \covarinfd \sgeq 0 \big \}$. Then, due to 
    $\etaLyaph^{-1} 
    (2 \kappaAB \nrm{\Delta K} + \kappaBc \nrm{\Delta K}^2) < 1$ and Lemma~\ref{lem:lyappb},
    the operator $\lyapfp$ is a contraction on $\mathcal{D}_X$.
    Invoking the \lbanachf{} \cite[Theorem 3.48]{Border2006}
    then guarantees that $\mathcal{D}_X$ contains a fixed-point of $\lyapfp$. 
    Therefore $\hat{K}$ stabilizes the true system since 
    $\nrm{\covarinf + \covarinfd} \leq \nrm{\covarinf} + \epsX$ 
    is finite, implying \mss{} and 
    we have $\nrm{\covarinfd} \leq \epsX$.
\end{proof}

We are now ready to prove the suboptimality bound
\begin{proof}[Proof of Theorem~\reftc{}]
    We start by using \cite[Lemma 3.5]{Gravell2019}, which
    states:
    \begin{equation} \label{eq:costdifference}
        \costact{\xinit} - \cost{\xinit} = 
        \tr \left[ \trans{\Delta K} (R + \G(P^\star)) \Delta K \covarinfh  \right],
    \end{equation}
    where $\hat{K} = K^\star + \Delta K$. 
    Let $\etaBR = \nrm{\G(P^\star) + R}$ and $\bar{n} = \min \{ n_x, n_u \}$.
    Then consider two matrices $A, B \in \Symm^n_x$ and let $\sigma_i$ 
    denote the $i$'th smallest eigenvalue of a matrix. Then we can show
    $\tr[AB] \leq \ssum_{i=1}^n \sigma_i(A) \sigma_i(B) \leq \tr[A] \nrm{B}$,
    where we used \emph{von Neumann's trace theorem} \cite[Theorem 7.4.1.1]{Horn2012} for the first inequality and the
    definition of the spectral norm and $\tr[A] = \ssum_{i=1}^n \sigma_i(A)$ 
    for the second. By repeatedly applying this property, we can show
    $\tr \big[ \trans{\Delta K} (R + \G(P^\star)) \Delta K \covarinfh  \big]
    \leq \tr \big[ \trans{\Delta K} \Delta K \big] \etaBR\nrm{\covarinfh}$.
    Since $\tr \big[ \trans{\Delta K} \Delta K \big] = \nrm{\Delta K}_F^2
    \leq  \bar{n} \nrm{\Delta K}^2$, we have:
    \begin{equation} \label{eq:costdifferencebound}
        \costact{\xinit} - \cost{\xinit} \leq \nm \epsK^2 \etaBR
        \nrm{\covarinfh}.
    \end{equation}
    The value of $\epsK$ is given as the right-hand side of \eqref{eq:ctrp} in
    Theorem~\reftc{}, leaving only 
    the derivation of a bound for $\covarinfh$. Note that, by definition of
    $\etaLyaph$ and since $\covarinf = \lyaphsa^{-1}(x_0 \trans{x_0})$, we have $\nrm{\covarinf} \leq \etaLyaph^{-1} \nrm{\xinit}^2$.
    Hence using Lemma~\ref{lem:lyapfpb} --- which is applicable
    due to \eqref{eq:stab} in Theorem~\reftc{} --- we can prove
    \begin{align*}
        &\nrm{\covarinf + \covarinfh} \leq \nrm{\covarinf} + \nrm{\covarinfh} \\
        &\leq \frac{1}{\etaLyaph}\left[1 +  \frac{2 \etaLBc \nrm{\Delta K} 
        + \kappaBc \nrm{\Delta K}^2}
        {\etaLyaph - 2 \etaLBc \nrm{\Delta K} 
        - \kappaBc \nrm{\Delta K}^2}\right]\nrm{\xinit}^2.
    \end{align*} 
    Substituting this into \eqref{eq:costdifference} results in \eqref{eq:regret}.
\end{proof}


\section{CONCLUSIONS AND FUTURE WORKS} \label{sec:con}
This paper studied the sample complexity of LQR applied to systems with
multiplicative noise. Overall we provided three types of sample complexities
in Theorem~\refta{}-\reftc{}.

The first is given in Theorem~\refta{},
which produces a bound on the amount of samples required to make the 
resulting problem stabilizable and the Riccati perturbation finite. 

The second sample-complexity is the one related to stability, given in
Theorem~\reftc{}. It gives a bound on the amount of samples required before
the produced controller stabilizes the true system. 

The final sample-complexity is then related to performance. It is given in 
Corollary~\ref{cor:subopt} and states that the suboptimality decreases with 
$1/\nsample$. This is the same rate as was derived for determinstic 
certainty equivalent LQR in \cite{Mania2019}. 

In future work, we aim to extend the results to partially observed systems and to the 
{distributionally robust approach}, where
the stability complexity is absent, since it is satisfied automatically. 

\bibliography{IEEEabrv,bibliography}

\begin{thebibliography}{10}
\providecommand{\url}[1]{#1}
\csname url@samestyle\endcsname
\providecommand{\newblock}{\relax}
\providecommand{\bibinfo}[2]{#2}
\providecommand{\BIBentrySTDinterwordspacing}{\spaceskip=0pt\relax}
\providecommand{\BIBentryALTinterwordstretchfactor}{4}
\providecommand{\BIBentryALTinterwordspacing}{\spaceskip=\fontdimen2\font plus
\BIBentryALTinterwordstretchfactor\fontdimen3\font minus
  \fontdimen4\font\relax}
\providecommand{\BIBforeignlanguage}[2]{{%
\expandafter\ifx\csname l@#1\endcsname\relax
\typeout{** WARNING: IEEEtran.bst: No hyphenation pattern has been}%
\typeout{** loaded for the language `#1'. Using the pattern for}%
\typeout{** the default language instead.}%
\else
\language=\csname l@#1\endcsname
\fi
#2}}
\providecommand{\BIBdecl}{\relax}
\BIBdecl

\bibitem{Coppens2019}
P.~Coppens, M.~Schuurmans, and P.~Patrinos, ``Data-driven distributionally
  robust {LQR} with multiplicative noise,'' \emph{arXiv preprint
  arXiv:1912.09990}, 2020.

\bibitem{Konstantinov1993}
M.~M. Konstantinov, P.~H. Petkov, and N.~D. Cliristov, ``{Perturbation analysis
  of the discrete Riccati equation},'' \emph{Kybernetika}, vol.~29, no.~1, pp.
  18--29, 1993.

\bibitem{Dean_2019}
S.~Dean, H.~Mania, N.~Matni, B.~Recht, and S.~Tu, ``On the sample complexity of
  the linear quadratic regulator,'' \emph{Found. Comput. Math.}, 2019.

\bibitem{Mania2019}
H.~Mania, S.~Tu, and B.~Recht, ``{Certainty Equivalence is Efficient for Linear
  Quadratic Control},'' \emph{arXiv preprint arXiv:1902.07826}, 2019.

\bibitem{Fazel2018}
M.~Fazel, R.~Ge, S.~M. Kakade, and M.~Mesbahi, ``Global convergence of policy
  gradient methods for the linear quadratic regulator,'' \emph{arXiv preprint
  arXiv:1801.05039}, 2018.

\bibitem{Wu1996}
F.~Wu, X.~H. Yang, A.~Packard, and G.~Becker, ``{Induced L2-norm control for
  LPV systems with bounded parameter variation rates},'' \emph{Int. J. Robust
  Nonlin.}, vol.~6, no. 9-10, pp. 983--998, 1996.

\bibitem{Byrnes1979}
C.~I. Byrnes, ``{On the stabilizability of linear control systems depending on
  parameters},'' in \emph{Proc. IEEE Conf. Dec. Control}, vol.~1.\hskip 1em
  plus 0.5em minus 0.4em\relax IEEE, 1979, pp. 233--236.

\bibitem{Boyd1994}
S.~Boyd, L.~{El Ghaoui}, E.~Feron, and V.~Balakrishnan, \emph{{Linear Matrix
  Inequalities in System and Control Theory}}.\hskip 1em plus 0.5em minus
  0.4em\relax Society for Industrial and Applied Mathematics, 1994.

\bibitem{Gravell2019}
B.~Gravell, P.~M. Esfahani, and T.~Summers, ``Learning robust control for {LQR}
  systems with multiplicative noise via policy gradient,'' \emph{arXiv preprint
  arXiv:1905.13547}, 2019.

\bibitem{Bernstein1986}
D.~S. Bernstein and S.~W. Greeley, ``{Robust Controller Synthesis Using the
  Maximum Entropy Design Equations},'' \emph{IEEE Trans. Autom. Control},
  vol.~31, no.~4, pp. 362--364, 1986.

\bibitem{Chiang2013}
C.-Y. Chiang, H.-Y. Fan, M.~M. Lin, and H.-A. Chen, ``Perturbation analysis of
  the stochastic algebraic {Riccati} equation,'' \emph{J. Inequal. Appl.}, vol.
  2013, no.~1, p. 580, 2013.

\bibitem{Billingsley1995}
P.~Billingsley, \emph{\BIBforeignlanguage{en}{Probability and measure}},
  3rd~ed., ser. Wiley series in probability and mathematical statistics, 1995.

\bibitem{Chen1998}
S.~Chen, X.~Li, and X.~Y. Zhou, ``Stochastic linear quadratic regulators with
  indefinite control weight costs,'' \emph{SIAM J. Control Optim.}, vol.~36,
  no.~5, pp. 1685--1702, 1998.

\bibitem{Delage2010}
E.~Delage and Y.~Ye, ``{Distributionally robust optimization under moment
  uncertainty with application to data-driven problems},'' \emph{Oper. Res.},
  vol.~58, no.~3, pp. 595--612, may 2010.

\bibitem{Hsu2012b}
D.~Hsu, S.~M. Kakade, and T.~Zhang, ``{Tail inequalities for sums of random
  matrices that depend on the intrinsic dimension},'' \emph{Electron. Commun.
  Prob.}, vol.~17, no.~14, 2012.

\bibitem{Gahinet1990}
P.~M. {Gahinet}, A.~J. {Laub}, C.~S. {Kenney}, and G.~A. {Hewer}, ``Sensitivity
  of the stable discrete-time {Lyapunov} equation,'' \emph{IEEE Trans. on
  Autom. Control}, vol.~35, no.~11, pp. 1209--1217, Nov 1990.

\bibitem{Morozan1983}
T.~Morozan, ``{Stabilization of some stochastic discrete-time control
  systems},'' \emph{Stoch. Anal. Appl.}, vol.~1, no.~1, pp. 89--116, 1983.

\bibitem{Recht2010}
B.~Recht, M.~Fazel, and P.~A. Parrilo, ``{Guaranteed minimum-rank solutions of
  linear matrix equations via nuclear norm minimization},'' \emph{SIAM Review},
  vol.~52, no.~3, pp. 471--501, 2010.

\bibitem{Sun1998}
J.-g. Sun, ``Perturbation theory for algebraic {Riccati} equations,''
  \emph{SIAM J. Matrix Anal. Appl.}, vol.~19, no.~1, pp. 39--65, 1998.

\bibitem{Border2006}
C.~D. Aliprantis and K.~C. Border, \emph{\BIBforeignlanguage{eng}{Infinite
  Dimensional Analysis: A Hitchhiker's Guide}}, 3rd~ed.\hskip 1em plus 0.5em
  minus 0.4em\relax Springer Berlin Heidelberg, 2006.

\bibitem{Horn2012}
R.~A. Horn and C.~R. Johnson, \emph{Matrix Analysis}, 2nd~ed.\hskip 1em plus
  0.5em minus 0.4em\relax Cambridge University Press, 2012.

\bibitem{Woodbury1950}
M.~A. Woodbury, \emph{{Inverting modified matrices}}.\hskip 1em plus 0.5em
  minus 0.4em\relax Princeton, NJ Department of Statistics, 1950.

\end{thebibliography}

\addtolength{\textheight}{-3cm}   

\appendix

\subsection{Proof of Lemma~\ref{lem:ricfp}} \label{sec:ric:a}
    The proof of Lemma~\ref{lem:ricfp} is stated in three steps. First
    a simplified expression of $\ricfp$ is derived. 
    Then the bound in \eqref{eq:ricfpb} is derived.
    Finally symmetry of $\ricfp(\Delta P)$ is verified. 

    \paragraph{Simplification of $\ricfp$} The result
    given in \eqref{eq:ricfp} follows from inspection. 
    To prove Lemma~\ref{lem:ricfp} however we require a simplified expression of
    $\ricz$ and $\ricd$. To derive these we can use the 
    matrix inversion lemma\footnote{
        Consider the matrices $X$, $Y$, $U$, $V$ of appropriate dimensions,
        with $X$ and $Y$ invertible. 
        The matrix inversion lemma states: If $(Y^{-1} + VX^{-1}U)$ is invertible
        then the inverse of $(X+UYV)$ exists and is given by
        $(X+UYV)^{-1} = X^{-1} - X^{-1}U(Y^{-1} + VX^{-1}U)^{-1}VX^{-1}$ 
        \cite{Woodbury1950}.
    }, to produce a simplified version
    of the Riccati equation, similarly to \cite{Konstantinov1993} 
    and \cite{Mania2019}:
    \begin{align} 
       \ric(P^\star, \Sig) &= P^\star - Q \label{eq:rics} \\ 
       &- \At(\oPe)(I + \Sm(\oPe))^{-1} \Am  = 0. \nonumber
    \end{align}
    %
    Where we applied the lemma to $\ric(P^*, \Sig) = P^* - Q
    - \At\oP(I - I \Bm (R+ \Bt \oP \Bm)^{-1} \Bt \oP)\Am$, with $X = I$, $Y = R^{-1}$,
    $U = \Bm$ and $V = \Bt\oP $,  with $\oP \dfn \oPe$ and $\oPd \dfn \oPde$.
    Similarly we can verify $\Lm = (I + \Sm(\oPe))^{-1} \Am$. The inverse of
    $(I + \Sm(\oPe))$ then exists due to the matrix inversion lemma,
    since $R + \Bt(\Sig \otimes P^*)\Bm$ is invertible.
    We then want to prove the following:
    \begin{equation} \label{eq:ricz}
       \ricz(\Delta P) \dfn \Lt\oPd(I + \Sm\oP+ \Sm\oPd)^{-1} \Sm\oPd\Lm,
    \end{equation}
    To simplify the derivation of \eqref{eq:ricz}, 
    let $\Ym \dfn I + \Sm \oP$ and $\Em \dfn \Sm \oPd$. As mentioned
    for \eqref{eq:rics} $\Ym$ is invertible and similarly $\Ym + \Em$ 
    is invertible due to similar arguments as before. 
    More specifically we assumed $\nrm{\Delta P} \leq \mu_{P}^\star$,
    which implies that $P + \Delta P \sgeq 0$. Therefore $R + \Bt (\oP + \oPd) \Bm$
    is invertible, implying that $\Ym + \Em$ is invertible due to the matrix
    inversion lemma. 
    Therefore we can derive the following relation:
    \begin{align}
        &\trans{\Ym}(\oP + \oPd)(\Ym + \Em)^{-1} 
        \nonumber \\
        & \qquad = (I + \oP \Sm) (\oP + \oPd)(\Ym + \Em)^{-1}
        \nonumber \\
        & \qquad = (\oP + \oPd + \oP \Sm\oP + \oP \Sm \oPd)
        (\Ym + \Em)^{-1} \nonumber \\
        & \qquad = (\oP(I + \Sm \oP + \Sm \oPd) + \oPd) \nonumber \\
        & \qquad \qquad \times (I + \Sm \oP + \Sm \oPd)^{-1} \nonumber \\
        & \qquad =  \oP  + \oPd(\Ym + \Em)^{-1}.
        \label{eq:ricza}
    \end{align}
    Using this property as well as $\Lm = \Ym^{-1} \Am$
    we can write:
    {\allowdisplaybreaks
    \begin{align*}
        \ricz(\Delta P)&=\ric(P^\star+\Delta P, \Sig) - \ric(P^\star, \Sig) 
        - \lyapcl(\Delta P)\\
        &=-\At(\oP + \oPd)(\Ym + \Em)^{-1}\Am \\
        & \qquad + \At \oP \Ym^{-1} \Am + \Lt \oPd \Lm \\
        &=-\Lt \trans{\Ym}(\oP + \oPd)(\Ym + \Em)^{-1}\Am \\
        & \qquad + \At \oP \Ym^{-1} \Am + \Lt \oPd \Lm \\
        &\eqwref{eq:ricza} \smashoverbracket{-\Lt\oP\Am - \Lt \oPd 
        (\Ym + \Em)^{-1} \Am}{} \\
        & \qquad + \At \oP \Ym^{-1} \Am + \Lt \oPd \Lm \\
        &= \Lt[ -\oP \Ym - \oPd 
        (\Ym + \Em)^{-1} \Ym \\
        & \qquad + \trans{\Ym} \oP + \oPd ]\Lm,
    \end{align*}
    The final equation can be simplified further 
    to \eqref{eq:ricz} by noting that 
    $\trans{\Ym} \oP = \oP \Ym$ and $I - (\Ym + \Em)^{-1}Y = 
    (\Ym + \Em)^{-1}(\Ym + \Em - \Ym) = (\Ym + \Em)^{-1}\Em$.
    }

    We then derive the a simplified description of $\ricd(\Delta P)$.   
    To do so we again define similar shorthands as we used for the derivation
    of $\ricz(\Delta P)$, but now in terms of $\hat{P} = P^* + \Delta P$. More
    specifically we now have $\oPh = \oPhe$, $\oPhd = \oPhde$, $\Yh = I + \Sm \oPh$
    and $\Eh = \Sm \oPhd$. Note that $\Yh$ and $\Yh + \Eh$ are invertible
    from the same arguments as before. The final expression is then:
    \begin{equation} \label{eq:ricd}
        \ricd(\Delta P) \dfn \At\Yh^{-\top}\oPhd (\Yh + \Eh)^{-1}\Am.
    \end{equation}
    To derive \eqref{eq:ricd} we first use the matrix inversion lemma:
    \begin{align} 
        (\Yh+\Eh)^{-1} &= 
        \Yh^{-1} - \Yh^{-1} \Eh (I + \Yh^{-1} \Eh)^{-1} \Yh^{-1} \nonumber \\
        &= \Yh^{-1}(I - \Eh (\Yh + \Eh)^{-1}). \label{eq:ricda}
    \end{align}
    Therefore we have:
    {
    \allowdisplaybreaks
    \begin{align*}
        \ricd&(\Delta P) = \ric(\hat{P}, \Sig) - \ric(\hat{P}, \Sig + \Sigd) \\
        &= \At(\oPh + \oPhd)(\Yh + \Eh)^{-1}\Am - \At \oPh \Yh^{-1} \Am \\
        &= \At[\oPhd(\Yh + \Eh)^{-1} \\
        & \qquad + \oPh(\Yh + \Eh)^{-1} - \oPh \Yh^{-1} ]\Am \\
        &\eqwref{eq:ricda}  
        \At[\oPhd(\Yh + \Eh)^{-1} \\
        & \qquad + \oPh\smashoverbracket{\Yh^{-1}
        (I - \Eh (\Yh + \Eh)^{-1})}{} - \oPh \Yh^{-1} ]\Am \\
        &= \At[\oPhd - \oPh \Yh^{-1} \Eh ](\Yh + \Eh)^{-1}\Am \\
        &= \At[I - \oPh \Yh^{-1} \Sm ]\oPhd (\Yh + \Eh)^{-1}\Am,
    \end{align*}
    where we used the definition of $\Eh$ for the final equality. We can further
    simplify the quantity between square brackets by applying 
    the matrix inversion lemma once more:
    \begin{equation*}
        I - \oPh \Yh^{-1} \Sm = I - \oPh (I + \Sm \oPh)^{-1} \Sm = \Yh^{-\top}.
    \end{equation*}
    Therefore we have proven \eqref{eq:ricd}.
    Using the simplification of $\ricz$ and $\ricd$ in \eqref{eq:ricz} and
    \eqref{eq:ricd} respectively we can move on to proving the bound in
    \eqref{eq:ricfpb}.

    \paragraph{Bound on $\ricfp(\Delta P)$}
    By the definition of $\ricfp$ and
    by the definition of the operator norm of a linear matrix
    function we can write:
    \begin{equation*}
        \nrm{\ricfp(\Delta P)} \leq \nrm{\ricf^{-1}} (\nrm{\ricz(\Delta P)}
        + \nrm{\ricd(\Delta P)}).
    \end{equation*}
    Therefore we will, in order, focus on bounding 
    $\nrm{\ricz(\Delta P)}$, $\nrm{\ricd(\Delta P)}$ and $\nrm{\ricf^{-1}}$.
    To do so we will need the following Lemma, which is a generalisation of
    \cite[Lemma 7]{Mania2019}:
    \begin{lemma} \label{lem:ineq}
        Let $M, N \in \Symm^n$ and $(I + MN)$ invertible. 
        Then
        \begin{theoremcontents}
            \item \hypertarget{lem:ineq:a}{} $(I+MN)^{-1}M = M(I+NM)^{-1}$, 
            \item \hypertarget{lem:ineq:b}{} $MN(N^\pinv + M)NM \sgeq 0 \Leftrightarrow (I+MN)^{-1}M \sleq M$,
            \item \hypertarget{lem:ineq:c}{} $MN(2N^\pinv + M)NM \sgeq 0 \\$
            $\qquad \Leftrightarrow (I + MN)^{-1}M(I+NM)^{-1} \sleq M$.
        \end{theoremcontents}
    \end{lemma}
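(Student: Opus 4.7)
The plan is to prove (i) as an algebraic identity first, and then exploit it to derive (ii) and (iii) by a congruence argument that converts the desired $\sgeq 0$ relations into positive semidefiniteness of simpler right-hand sides.

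For (i), I would start from the obvious identity $(I+MN)M = M + MNM = M(I+NM)$. It remains to check that $(I+NM)$ is invertible, which follows quickly from invertibility of $(I+MN)$: if $(I+NM)x = 0$ then multiplication by $M$ on the left gives $(I+MN)Mx = 0$, hence $Mx = 0$, hence $x = -NMx = 0$. Pre- and post-multiplying the identity by the appropriate inverses then yields (i). Using (i), I would observe that $(I+MN)^{-1}M$ and $(I+MN)^{-1}M(I+NM)^{-1}$ are symmetric, by direct transposition and using $(I+MN)^\top = I + NM$. This is needed for the inequalities in (ii) and (iii) to even be well-posed.

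For (ii) and (iii), the key observation is that conjugation by $(I+MN)$ on the left and $(I+NM) = (I+MN)^\top$ on the right is a congruence (both matrices being invertible), and hence preserves positive semidefiniteness in both directions. I would directly expand $(I+MN)M(I+NM) = M + 2MNM + MNMNM$ and apply the congruence to the symmetric expressions $M - (I+MN)^{-1}M$ and $M - (I+MN)^{-1}M(I+NM)^{-1}$. A short cancellation reveals that these conjugated quantities collapse to $MNM + MNMNM$ and $2MNM + MNMNM$ respectively. Finally, invoking the defining pseudoinverse property $NN^\pinv N = N$ — so that $MN N^\pinv NM = MNM$ — rewrites these as $MN(N^\pinv + M)NM$ and $MN(2N^\pinv + M)NM$, yielding both equivalences.

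The one delicate point, and really the only potential obstacle, is the handling of the pseudoinverse: the algebra never inverts $N$ directly and only uses the collapse identity $NN^\pinv N = N$, which is what allows the statement to be formulated in the generality of an arbitrary symmetric $N$. Everything else is a routine but careful expansion together with the observation that congruence by an invertible matrix is an order-preserving equivalence on $\Symm^n$.
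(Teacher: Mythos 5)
Your proposal is correct and takes essentially the same route as the paper: the congruence by $(I+MN)$ and $(I+NM)=(I+MN)^\top$ is exactly the paper's substitution $y=(I+NM)^{-1}x$ in the quadratic forms, and the reduction via $NN^\pinv N = N$ matches. The only cosmetic difference is in part \emph{(i)}, where you use the push-through identity $(I+MN)M = M(I+NM)$ directly (with an explicit check that $I+NM$ is invertible) instead of the matrix inversion lemma; both are equally valid.
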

    \begin{proof}
        We first prove \emph{\textbf{i)}}. To do so we
        apply the matrix inversion lemma:
        \begin{align*}
            (I &+ MIN)^{-1} M 
            = (I - M(I+ NM)^{-1}N)M \\
            &= M(I+NM)^{-1} (I+NM) - M(I+NM)^{-1} NM\\
            &= M(I+NM)^{-1} = ((I + MN)^{-1} M)^\top.
        \end{align*}
        Therefore \emph{\textbf{i)}} is verified. 
        To prove \emph{\textbf{ii)}} we write the matrix inequality as:
        \begin{equation} \label{eq:ineq1}
            \trans{x} M x \geq \trans{x} (I + MN)^{-1} M x, \quad \forall x.
        \end{equation}  
        Let $y = (I + NM)^{-1}x$, then we can substitute it into \eqref{eq:ineq1}
        to get an equivalent condition:
        \begin{align}
            &\trans{y} (I+MN) M (I + NM) y \nonumber \\
            & \qquad \geq \trans{y}  M (I+NM) y, \, \forall y \nonumber \\
            \Leftrightarrow \, & \trans{y} (MNM + MNMNM) y \geq 0, \quad \forall y. \label{eq:ineq2} 
        \end{align}
        Using the fact that a pseudoinverse is a weak inverse (\ie{} $N N^\pinv N = N$),
        we can then show that \eqref{eq:ineq2} is equivalent to
        \begin{equation*}
            \trans{y} MN(N^\pinv + M)NM y \geq 0, \quad \forall y,
        \end{equation*}
        showing the required result in \emph{\textbf{i)}}. 
        The proof for \emph{\textbf{iii)}} follows a similar procedure. 
        \peter{It would look as follows. 
        We want to find an equivalent expression for:
        \begin{equation} \label{eq:ineq5}
            \trans{x} M x \geq \trans{x} (I + MN)^{-1} M (I + NM)^{-1}x, \quad \forall x.
        \end{equation}
        Let $y = (I+NM)^{-1}x$, then we can substitute it into \eqref{eq:ineq5}
        to get an equivalent condition:
        \begin{align}
            &\trans{y} (I+MN) M (I + NM) y \nonumber \\
            & \qquad \geq \trans{y} M  y, \, \forall y \nonumber \\
            \Leftrightarrow \, & \trans{y} (2MNM + MNMNM) y \geq 0, \quad \forall y. \label{eq:ineq6} 
        \end{align}
        Using the fact that a pseudoinverse is a weak inverse (\ie{} $N N^\pinv N = N$),
        we can then show that \eqref{eq:ineq6} is equivalent to
        \begin{equation*}
            \trans{y} MN(2N^\pinv + M)NM y \geq 0, \quad \forall y,
        \end{equation*}
        showing the required result in \emph{\textbf{iii)}}.
        }
    \end{proof}
    
    To bound $\nrm{\ricz(\Delta P)}$ we first employ Lemma~\reflemb{} and the
    fact that $\Sm \sgeq 0$ and $\oP + \oPd \sgeq 0$ to state:
    \begin{equation*}
        \oPd(I+\Sm\oP+\Sm\oPd)^{-1}\Sm\oPd \sleq \oPd \Sm \oPd.
    \end{equation*}
    Therefore we can write:
    \begin{equation} \label{eq:riczb}
        \nrm{\ricz(\Delta P)} \leq \nrm{\Lt \oPd \Sm \oPd \Lm} \leq 
        \kappaS \nrm{\Delta P}^2,
    \end{equation}
    where the final inequality follows from the definition of $\kappaS
    \dfn \nrm{\Lm}^2 \nrm{\Sm} \nrm{\Sig}^2$. 
    
    {\allowdisplaybreaks
    We can apply a similar procedure to find a bound for $\nrm{\ricd(\Delta P)}$,
    the simplification however is slightly more involved. First consider the
    following:
    \begin{align*} 
        \Yh^{-\top}&\oPhd (\Yh + \Eh)^{-1} = 
        \Yh^{-\top} \oPhd (I + \Yh^{-1} \Eh)^{-1} \Yh^{-1} \\
        &=\Yh^{-\top} \oPhd (I + (I+\Sm \oPh)^{-1} \Sm \oPhd)^{-1}  \Yh^{-1} \\
        &\sleq \Yh^{-\top} \oPhd \Yh^{-1},
    \end{align*}
    where the inequality follows from Lemma~\reflema{} 
    followed by Lemma~\reflemb{}. To prove that this
    is allowed, note that $N = (I+\Sm \oPh)^{-1} \Sm$ and $M = \oPhd$. 
    We need to prove that $N \in \Symm^{n_x n_w}$ and
    that $MN(N^\pinv + M)NM \sgeq 0$. Symmetry follows from Lemma~\reflema{}. 
    The matrix inequality is then verified by noting that 
    $\Sm(N^\pinv + M)\Sm = \Sm\Sm^\pinv \Sm + \Sm \Sm^\pinv \Sm \oPhd \Sm + 
    \Sm\oPhd \Sm = \Sm(\Sm^\pinv + \oPh + \oPhd)\Sm \sgeq 0$,
    where we used $\Sm \Sm^\pinv \Sm  = \Sm$.
    The next step involves getting rid of $\Sigd$, for which we will invoke
    Lemma~\ref{lem:bndineq}:
    \begin{align*}
        \ricd(\Delta P) &\sleq \At \Yh^{-\top} (\Sigd \otimes \hat{P}) \Yh^{-1} \Am 
        \\ &\sleq \sigm \At \Yh^{-\top} (\Sig \otimes \hat{P}) \Yh^{-1} \Am,
    \end{align*}
    To produce the final bound we still need to get rid of the inverses of $\Yh$. 
    More specifically we can prove that
    \begin{equation} \label{eq:ineqricdother}
        \sigm \At \Yh^{-\top} (\Sig \otimes \hat{P}) \Yh^{-1} \Am \sleq 
        \sigm \At (\Sig \otimes \hat{P}) \Am
    \end{equation}
    by using Lemma~\reflemc{} and $\Sm \sgeq 0$ and $\oPh \sgeq 0$. Remember how 
    $\F(P) = \At (\Sig \otimes P) \Am$, hence we can write:
    \begin{align} 
        \nrm{\ricd(\Delta P)} &\leq \sigm \nrm{\F(P^\star + \Delta P)} \nonumber \\
        & \leq \sigm (
        \nrm{\F(P^\star)} + \nrm{\F(\Delta P)}) \nonumber \\
        & = \sigm (\etaA +\kappaA\nrm{\Delta P}),
        \label{eq:ricdb}
    \end{align}
    where we used the definitions of $\etaA$ and $\kappaA$ for the final equality. 
    Therefore, by combining \eqref{eq:riczb}, \eqref{eq:ricdb} and the
    definitions of $\etaL$ and $\sigm$, we have proven that, $\forall \Delta P \in 
    \mathcal{D}_P$:
    \begin{align*}
        \nrm{\ricfp(\Delta P)} &\leq h(\nrm{\Delta P}, \sigm) \\ 
        &= \etaLyap^{-1}(\sigm(\etaA + \kappaA 
        \nrm{\Delta P}) + \kappaS \nrm{\Delta P}^2).
    \end{align*}
    This is easily extended to \eqref{eq:ricfpb} by noting that $h(\nrm{\Delta P}, 
    \sigm)$ 
    increases monotonically for increasing values of $\nrm{\Delta P}$. 
    Therefore for all $\nrm{\Delta P} \leq \epsP$ we have that 
    $h(\nrm{\Delta P}, \sigm) \leq h(\epsP, \sigm)$, which implies
    \eqref{eq:ricfpb}.
    }
    \paragraph{Symmetry of $\ricfp(\Delta P)$}  
    The final step is to show that $\ricfp(\Delta P)$ is symmetric whenever
    $\Delta P$ is symmetric. To see this first note that $\ricz(\Delta P)$ 
    and $\ricd(\Delta P)$ are symmetric, since it follows from inspection that $\lyapcl$ and
    $\ric$ preserve the symmetry of the input. 
    Therefore if we prove that $\lyapcl^{-1}$
    preserves symmetry then $\ricfp$ does so as well. To do so note that
    $\lyapcl(P) = P - \F(P)$. It is then easy to verify that $P 
    = \ssum_{k=0}^{\infty} \F^k(Q)$, where $\F^k(Q)$ implies that $\F$ is 
    applied $k$ times, solves $\lyapcl(P) = Q$. 
    After all $P - \F(P) = \ssum_{k=0}^{\infty} \F^k(Q) - \ssum_{k=1}^{\infty} \F^k(Q) = Q$.
    Since $\F$ clearly preserves
    symmetry, $\lyapcl^{-1}$ does so as well. \hfill{} \qedsymbol

\end{document}